\newtheorem{theorem}{\bf Theorem}[section]
\newtheorem{Definition}{\bf Definition}
\newtheorem{lemma}{Lemma}[section]
\newcommand\vldbdoi{XX.XX/XXX.XX}
\newcommand\vldbpages{XXX-XXX}
\newcommand\vldbvolume{14}
\newcommand\vldbissue{1}
\newcommand\vldbyear{2020}
\newcommand\vldbauthors{\authors}
\newcommand\vldbtitle{\shorttitle} 
\newcommand\vldbavailabilityurl{URL_TO_YOUR_ARTIFACTS}
\newcommand\vldbpagestyle{plain} 
\begin{document}

\title{LDP-FPMiner: FP-Tree Based Frequent Itemset Mining with Local Differential Privacy}

\author{Zhili Chen}
\affiliation{%
  \institution{Shanghai Key Laboratory of Trustworthy Computing,\\
East China Normal University}
  \city{Shanghai}
  \country{China}
}
\email{zhlchen@sei.ecnu.edu.cn}

\author{Jiali Wang}
\affiliation{%
  \institution{School of Computer Science and Technology,\\ Anhui University}
  \city{Hefei}
  \country{China}
}
\email{992130182@qq.com}

\begin{abstract}
Data aggregation in the setting of local differential privacy (LDP) guarantees strong privacy by providing plausible deniability of sensitive data. Existing works on this issue mostly focused on discovering heavy hitters, leaving the task of frequent itemset mining (FIM) as an open problem. To the best of our knowledge, the-state-of-the-art LDP solution to FIM is the SVSM protocol proposed recently. The SVSM protocol is mainly based on the padding and sampling based frequency oracle (PSFO) protocol, and regarded an itemset as an independent item without considering the frequency consistency among itemsets.

In this paper, we propose a novel LDP approach to FIM called LDP-FPMiner based on frequent pattern tree (FP-tree). Our proposal exploits frequency consistency among itemsets by constructing and optimizing a noisy FP-tree with LDP. Specifically, it works as follows. First, the most frequent items are identified, and the item domain is cut down accordingly. Second, the maximum level of the FP-tree is estimated. Third, a noisy FP-tree is constructed and optimized by using itemset frequency consistency, and then mined to obtain the $k$ most frequent itemsets. Experimental results show that the LDP-FPMiner significantly improves over the state-of-the-art approach, SVSM, especially in the case of a high privacy level.
\end{abstract}

\maketitle

\pagestyle{\vldbpagestyle}
\begingroup\small\noindent\raggedright\textbf{PVLDB Reference Format:}\\
\vldbauthors. \vldbtitle. PVLDB, \vldbvolume(\vldbissue): \vldbpages, \vldbyear.\\
\href{https://doi.org/\vldbdoi}{doi:\vldbdoi}
\endgroup
\begingroup
\renewcommand\thefootnote{}\footnote{\noindent
This work is licensed under the Creative Commons BY-NC-ND 4.0 International License. Visit \url{https://creativecommons.org/licenses/by-nc-nd/4.0/} to view a copy of this license. For any use beyond those covered by this license, obtain permission by emailing \href{mailto:info@vldb.org}{info@vldb.org}. Copyright is held by the owner/author(s). Publication rights licensed to the VLDB Endowment. \\
\raggedright Proceedings of the VLDB Endowment, Vol. \vldbvolume, No. \vldbissue\ %
ISSN 2150-8097. \\
\href{https://doi.org/\vldbdoi}{doi:\vldbdoi} \\
}\addtocounter{footnote}{-1}\endgroup

\ifdefempty{\vldbavailabilityurl}{}{
\vspace{.3cm}
\begingroup\small\noindent\raggedright\textbf{PVLDB Artifact Availability:}\\
The source code, data, and/or other artifacts have been made available at \url{\vldbavailabilityurl}.
\endgroup
}

\section{Introduction}
\label{intro}
Differential privacy (DP) has become the $de\ facto$ standard for privacy protection. It was named one of the ten breakthrough technologies in 2020 by the MIT technology review\cite{MITreview}. Generally, there are two types of differential privacy - centralized differential privacy (CDP)\cite{a7} and local differential privacy (LDP)\cite{ldp}, and the focus of this work is the local setting. The most typical LDP protocols \cite{a1,a2,a8,privtrie,privkv,privacyatscal,b1,b2,b3,b4,a12} enable users to randomly perturb their inputs. This guarantees strong privacy without relying on a trusted third party by providing plausible deniability of sensitive data. In practice, LDP has many compelling applications deployed by Apple\cite{apple,applenewwords}, Google\cite{rappor,rapporunknow}, Microsoft\cite{microsoft} and Alibaba\cite{alibaba}, and so on.

As the development of data analysis, privacy issues have drawn more and more attention. Over the past 30 years, data in various fields have increased on a large scale. Such massive amounts of data might have potential correlations (or patterns), which can be extracted or mined for more interesting knowledge \cite{datamining}. As a core data mining task, frequent itemset (or pattern) mining (FIM) plays an essential role in mining association rules\cite{apriori,apr}. However, it also poses a threat to user privacy\cite{a10}. An attacker with strong background information may learn private information from the unprotected itemsets discovered. For this reason, extensive studies have been conducted for the task of privacy-preserving frequent itemset mining (PPFIM)\cite{ppfim,b5,b6}. Especially, differentially private schemes for frequent itemset mining have come to the fore\cite{a3,a4,a5,a6,a2}.

In this paper, we study the task of discovering top-$k$ itemsets over sensitive transactional (or set-valued) data in the context of LDP. Specifically, there are $n$ users, whose transaction $t$ is a subset of $d$ distinct items, denoted by the item domain $\mathcal{X} = \{x_1,x_2,...,x_d\}$. An untrusted analyst (or aggregator) wants to discover the $k$ most frequent itemsets with a given privacy budget $\epsilon$, which measures the scale of privacy provided. This is more challenging even when one just tries to find heavy hitters, and one alternative to address this problem is to encode each transaction as an input, i.e., a value in the power set $\mathfrak{p}(\mathcal{X})$, and then apply existing frequency oracle protocols, such as RAPPOR\cite{rappor} and OLH\cite{a8}, to privately collect estimations. However, in this particular case, the exponential size of the domain $\mathfrak{p}(\mathcal{X})$ may inject considerable noise that results in a poor accuracy.

Meanwhile, the heterogeneous number of items that users hold in the transactional data setting makes the task more complicated. To deal with this, the \emph{padding and sampling (PS)} technique is widely used in the literature, i.e., the user pads her transaction $t$ with dummy items to a specified size $l$ and randomly selects one item as her input, denoted by $PS_{l}(t)$. However, the optimal selection of $l$ is non-trivial task. For instance, in \cite{a1,privset,gu2019pckv}, they suppose each user has at most predefined $L$ items. Ye et al.\cite{privkv} convert the key-value set of each user to its length $d$ binary form, which does not work well for a large domain $d$\cite{gu2019pckv}. The baseline strategy of setting suitable $l$\cite{a2} is to use the 90th percentile of the length of inputs collected privately in the context of LDP.

To the best of our knowledge, the state-of-the-art solution is the Set-Value itemSet Mining (SVSM) protocol \cite{a2}, which discovers top-$k$ itemsets based on the \emph{padding and sampling based frequency oracle (PSFO)} protocol. The core idea of SVSM is to construct a pony-size domain set $|\emph{IS}| = 2k$ of potential itemsets likely to be frequent according to their guessing frequencies, then encode each itemset as one singleton and apply the PSFO protocol with the domain $\emph{IS}$ to privately collect estimations. 
However, the SVSM does not consider the consistency among itemsets, leaving considerable room for the performance improvement.

Inspiringly, we introduce the structure of frequent-pattern tree (FP-tree)\cite{fp} to the solution of FIM problem with LDP for the first time. FP-tree can be used to effectively discover frequent patterns in the traditional non-privacy setting with mild computational cost. In our context, we use an FP-tree constructed with LDP to exploit the frequency consistency among itemsets to improve the data utility. The post-processing property guarantees that itemset mining over the noisy FP-tree do not disclose the privacy as well. Specifically, the noisy FP-tree is constructed in breadth-first (BF) order, and instead of dealing with the heterogeneous number of items, each transaction can be converted into one prefix of the FP-tree. And, to allocate the privacy budget, we approximate the maximum level $M$ of the tree by setting it as the 80th percentile of length distribution of users. Although the large size of the domain at each level increases the noise as well as the cost, we propose a pruning algorithm to effectively cut down the domain into a small fixed size for accuracy improvement.

Summarily, our main contributions are as follows.
\begin{itemize}
\item We propose a novel approach called LDP-FPMiner to discover $k$ most frequent itemsets in LDP setting based on FP-tree for the first time.
\item We design an algorithm that constructs an FP-tree with LDP in breadth-first order and optimize the noisy FP-tree by exploring frequency consistency.
\item Experimental results on both synthetic and real-world datasets show a significant performance improvement over the state-of-the-art SVSM.
\end{itemize}

\noindent\textbf{Roadmap.} The remainder of this paper is organized as follows. Section \ref{preliminaries} gives the preliminaries. Section \ref{problem} introduces the problem setting and the state-of-the-art approach. We present our approach and conduct theoretical analysis in Section \ref{fpmine sum}. In Section \ref{optimizations}, we optimize the proposed scheme in several ways. The experimental results are presented in Section \ref{experiment}. Section \ref{relatedwork} is the related work and finally Section \ref{conclusion} concludes our work.

\section{Preliminaries}
\label{preliminaries}
\subsection{Local Differential Privacy (LDP)}
In local differential privacy setting, each user randomly perturbs its raw data and then sends the resulted data to the analyst. The untrusted analyst can only access the perturbed data other than the raw ones, which guarantees the privacy. Formally, let $\mathcal{T}$ denote the domain of a sensitive value, $\epsilon$-local differential privacy (or local privacy) is defined as follows.

\begin{Definition}
($\epsilon$-Local Differential Privacy, $\epsilon$-LDP). A randomized algorithm $\mathcal{A}$ satisfies $\epsilon$-local differential privacy (or local privacy), if and only if for (1) any pair of input $t_i,t_j \in \mathcal{T}$, and (2) any possible output $\mathcal{O}$ of $\mathcal{A}$, we have:\\
$$\frac{\mathbf{Pr}[\mathcal{A}(t_i)=\mathcal{O}]}{\mathbf{Pr}[\mathcal{A}(t_j)=\mathcal{O}]} \leq e^{\epsilon}.$$
\end{Definition}

Two vitally important properties of differential privacy are \emph{sequential composability} \cite{a9} and \emph{post-processing} \cite{post-processing}.

\begin{lemma}\label{sequential composability}
\textbf{(Sequential composability).} Given $m$ randomized algorithms $\mathcal{A}_i(1 \leq i \leq m)$, each of which satisfies $\epsilon_i$-LDP. Then the sequential application of $\mathcal{A}_i$ collectively provides $(\sum_{i=1}^{m} \epsilon_i)$-LDP.
\end{lemma}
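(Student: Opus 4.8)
The plan is to work directly from the definition of $\epsilon$-LDP applied to the composed mechanism. Write $\mathcal{A} = (\mathcal{A}_1, \ldots, \mathcal{A}_m)$ for the mechanism that, on input $t$, runs the $\mathcal{A}_i$ in sequence --- where $\mathcal{A}_i$ is allowed to see $t$ together with the already-released outputs $o_1, \ldots, o_{i-1}$ --- and releases the tuple $\mathcal{O} = (o_1, \ldots, o_m)$. Fix an arbitrary pair of inputs $t_a, t_b \in \mathcal{T}$ and an arbitrary output tuple $\mathcal{O}$; the goal is to show $\mathbf{Pr}[\mathcal{A}(t_a) = \mathcal{O}] \le e^{\sum_{i=1}^{m} \epsilon_i}\, \mathbf{Pr}[\mathcal{A}(t_b) = \mathcal{O}]$, which is exactly the defining inequality of $(\sum_i \epsilon_i)$-LDP for $\mathcal{A}$.

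First I would apply the chain rule of conditional probability to factor
$$\mathbf{Pr}[\mathcal{A}(t) = \mathcal{O}] = \prod_{i=1}^{m} \mathbf{Pr}[\mathcal{A}_i(t, o_1, \ldots, o_{i-1}) = o_i \mid o_1, \ldots, o_{i-1}],$$
using that the internal coins of the $\mathcal{A}_i$ are drawn independently, so conditioning on the earlier outputs fixes the inputs fed to $\mathcal{A}_i$ but leaves its fresh randomness intact. Then, for each factor, the fact that $\mathcal{A}_i$ satisfies $\epsilon_i$-LDP --- with the prefix $(o_1, \ldots, o_{i-1})$ regarded as a fixed part of the configuration --- gives
$$\mathbf{Pr}[\mathcal{A}_i(t_a, o_1, \ldots, o_{i-1}) = o_i] \le e^{\epsilon_i}\, \mathbf{Pr}[\mathcal{A}_i(t_b, o_1, \ldots, o_{i-1}) = o_i].$$
Multiplying these $m$ inequalities and recombining the products via the same factorization yields the overall ratio bound $e^{\sum_{i=1}^{m} \epsilon_i}$.

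The step that needs the most care is the factorization: one has to be explicit that ``sequential application'' means each $\mathcal{A}_i$ uses independent randomness, so that the joint law really is a product of conditionals and the $\epsilon_i$-LDP guarantee of $\mathcal{A}_i$ can be invoked pointwise for every fixed prefix. A secondary technical point is that when an output domain is continuous the probabilities above should be read as densities (or the argument phrased for measurable output events), but since only the ratio matters this does not affect the conclusion. Once the factorization is pinned down, the remainder is just multiplying $m$ inequalities; equivalently, one can run an easy induction on $m$, composing $\mathcal{A}_1$ with the already-composed $(\mathcal{A}_2, \ldots, \mathcal{A}_m)$, to obtain the same bound.
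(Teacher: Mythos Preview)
Your argument is correct and is the standard proof of sequential composition: factor the joint output probability via the chain rule, apply each $\epsilon_i$-LDP bound to its factor, and multiply. The care you take with independent internal randomness and with the continuous-output caveat is appropriate.

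There is, however, nothing to compare against here: the paper does not supply a proof of this lemma. It is stated as a known property of differential privacy with a citation to \cite{a9} and used as a black box (e.g., implicitly in the privacy analysis of Theorems~\ref{the:lpd-fpminer} and~\ref{the:ldp-construct}). So your write-up is not an alternative to the paper's proof but a self-contained justification of a cited fact.
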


\begin{lemma}\label{post processing}
\textbf{(Post-processing).} For any method $\phi$ which works on the output of an algorithm $\mathcal{A}$ that satisfies $\epsilon$-LDP without accessing the raw data, the procedure $\phi \big(\mathcal{A(\cdot)} \big)$ remains $\epsilon$-LDP.
\end{lemma}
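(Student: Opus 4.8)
The plan is to reduce the claim to the defining inequality of $\epsilon$-LDP for $\mathcal{A}$, dealing with the randomness of $\phi$ by conditioning. First I would treat the case where $\phi$ is a \emph{deterministic} map from the output space of $\mathcal{A}$ into some range $\mathcal{Z}$. Fix any two inputs $t_i,t_j\in\mathcal{T}$ and any $z\in\mathcal{Z}$. Since $\phi$ touches only the output of $\mathcal{A}$ and not the raw data, the event $\{\phi(\mathcal{A}(t))=z\}$ coincides with $\{\mathcal{A}(t)\in\phi^{-1}(z)\}$, so
\begin{align*}
\mathbf{Pr}[\phi(\mathcal{A}(t_i))=z] &= \sum_{\mathcal{O}\in\phi^{-1}(z)}\mathbf{Pr}[\mathcal{A}(t_i)=\mathcal{O}] \\
&\le e^{\epsilon}\sum_{\mathcal{O}\in\phi^{-1}(z)}\mathbf{Pr}[\mathcal{A}(t_j)=\mathcal{O}] = e^{\epsilon}\,\mathbf{Pr}[\phi(\mathcal{A}(t_j))=z],
\end{align*}
where the inequality applies the $\epsilon$-LDP guarantee of $\mathcal{A}$ termwise. (In the continuous setting the sums become integrals over $\phi^{-1}(z)$, and one first lifts the pointwise definition to the set form $\mathbf{Pr}[\mathcal{A}(t_i)\in E]\le e^{\epsilon}\,\mathbf{Pr}[\mathcal{A}(t_j)\in E]$ for every measurable $E$ by integrating the pointwise bound.) This already establishes the lemma when $\phi$ is deterministic.

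Second, I would handle a \emph{randomized} $\phi$ by writing $\phi(y)=g(y,R)$, where $R$ is an auxiliary random variable drawn independently of the coins of $\mathcal{A}$ and of the data, and $g(\cdot,r)$ is deterministic for each fixed $r$. Conditioning on $R=r$ and using independence, $\mathbf{Pr}[\phi(\mathcal{A}(t))=z \mid R=r]=\mathbf{Pr}[g(\mathcal{A}(t),r)=z]$, which by the deterministic case lies within a multiplicative factor $e^{\epsilon}$ between $t_i$ and $t_j$ for \emph{every} $r$. Averaging over the law of $R$ preserves the inequality, since $a_r\le e^{\epsilon}b_r$ for all $r$ implies $\int a_r\,dR\le e^{\epsilon}\int b_r\,dR$; hence $\mathbf{Pr}[\phi(\mathcal{A}(t_i))=z]\le e^{\epsilon}\,\mathbf{Pr}[\phi(\mathcal{A}(t_j))=z]$, i.e. $\phi\big(\mathcal{A}(\cdot)\big)$ satisfies $\epsilon$-LDP.

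The main obstacle — really the only subtlety — is the bookkeeping in the first step: the definition of $\epsilon$-LDP as stated is pointwise in a single output $\mathcal{O}$, so it must first be lifted to arbitrary (measurable) output events before it can be invoked on the preimage $\phi^{-1}(z)$; and in the randomized case one must make explicit that $\phi$'s internal randomness is independent of $\mathcal{A}$'s coins, which is exactly what ``works on the output $\ldots$ without accessing the raw data'' encodes. Everything else is a termwise, respectively monotone, application of the same bound.
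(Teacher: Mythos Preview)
Your argument is correct and is the standard proof of the post-processing property: reduce to the deterministic case by pulling back the output event to $\phi^{-1}(z)$ and applying the $\epsilon$-LDP inequality termwise, then handle randomized $\phi$ by conditioning on its independent internal coins and averaging. The only caveat worth noting is that the paper does not actually supply its own proof of this lemma; it is stated in the preliminaries as a known property with a citation \cite{post-processing}, so there is no in-paper argument to compare against. Your write-up would serve perfectly well as a self-contained justification should one be desired.
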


\subsection{Frequency Oracles with LDP}
\label{fo protocol}
A frequency oracle (FO) protocol in the local setting enables the analyst to estimate frequency of any given value $x \in \mathcal{X}$ from all sanitized data received from users. In \cite{a8}, two effective protocols, generalized random response (GRR) and optimized local hash (OLH) were proposed to estimate frequencies with a large domain size $d = |\mathcal{X}|$.

\textbf{Generalized Random Response (GRR)}\cite{a8}: The GRR protocol makes each user answer correctly $y = x$ with probability $p = \frac{e^{\epsilon}}{e^{\epsilon} + d - 1}$, and answer wrongly $y \neq x$ with probability $q = \frac{1-p}{d-1} = \frac{1}{e^{\epsilon} + d - 1}$. Specially, it turns out that the fundamental randomized response (RR)\cite{rr} protocol is the special case when $d = 2$ and achieves the best performance\cite{a8,a12}. The shortage of GRR is that the estimated variance is linear with $d$:
\begin{equation}
\operatorname{Var}\big[\tilde{f}_\textit{grr}(t)\big] = n \cdot \frac{d - 2+e^{\epsilon}}{(e^{\epsilon}-1)^2} .
\label{grr variance}
\end{equation}

\textbf{Optimized Local Hashing (OLH)}\cite{a8}: The OLH protocol applies a hash function to map each input value into a value in $[g]$, where $g \geq 2$ and $g \ll d$. Then the GRR protocol is used to perturb the hash values in the domain $[g]$. In \cite{a8}, the optimal choice of the parameter $g$ is shown to be $\lceil e^{\epsilon}+1 \rceil$, which leads to the minimum variance.

Typically, let $\mathbb{H}$ be a universal hash function family, and $H$ be a function randomly chosen from $\mathbb{H}$ that outputs a value $x = H(v) \in [g]$ for every $v \in [d]$. The perturbing process is formalized as $\operatorname{Perturb}_{\textit{OLH}} \big(\langle H,x \rangle \big) = \langle H,y \rangle$, where
$$\forall_{i\in [g]} \mathbf{Pr} [y=i] =
\begin{cases}
p = \frac{e^{\epsilon}}{e^{\epsilon}+g-1},&\text{if $x=i$} \\
q = \frac{1}{e^{\epsilon}+g-1},&\text{if $x\neq i$}
\end{cases}.
$$

Then, the analyst counts the number of perturbed values that ``supports'' the input $t$, denoted by $\mathds{1}_t$, and transforms it to the unbiased estimation
\begin{equation}
\tilde{f}_\textit{olh}(t) = \frac{\mathds{1}_t - n/g}{p-1/g}.
\label{olh aggregate}
\end{equation}

Accordingly, the variance of this estimation is
\begin{equation}
\operatorname{Var}\big[\tilde{f}_\textit{olh}(t)\big] =n \cdot \frac{4e^{\epsilon}}{{(e^{\epsilon}-1)}^2} .
\label{olh variance}
\end{equation}


\textbf{Padding and Sampling based Frequency Oracle (PSFO)} \cite{a1,a2}: The PSFO protocol is used for mining frequent items over set-values of various lengths. The protocol can be described as a padding and sampling function $PS_l (\cdot)$, which specifies a maximum length $l$, and a frequency oracle (FO). Specifically, each user pads its transaction $t$ with dummy items to the specified maximum length $l$, and uses the FO protocol to transmit one item randomly sampled from the padded transaction. Then, the analyst applies the FO protocol to evaluate item frequencies. Finally, the estimated item frequency are corrected by multiplying the length $l$.

\subsection{Frequent-Pattern (FP) Tree}\label{sec:fptree}
FP-tree \cite{fp} is an efficient method for mining frequent patterns. We describe two aspects of it, namely, FP-tree construction and FP-growth mining.

\textbf{FP-tree Construction.} FP-tree is an extended prefix-tree structure for storing compressed, crucial information about frequent patterns. To construct a FP-tree, the transactions are first scanned and preprocessed, such that only frequent items are included and sorted in frequency descending order. The resulted transactions are then arranged in a tree structure, where if any two transactions share the same prefix, the shared part can be merged using one prefix structure with count fields accumulated properly. A header table links to patterns led by each frequent item. For example, Fig \ref{fptree} shows a FP-tree for five transactions.

\textbf{FP-Growth Mining.} FP-growth mines the complete set of frequent patterns in no privacy setting based on an FP-tree, without a costly candidate generation process. It starts from an initial frequent pattern of length 1, examines only a sub-database (called conditional pattern base) which consists of patterns composed of the initial pattern and the subsequent co-occurring frequent items, constructs a (conditional) FP-tree, and performs mining recursively with such a tree.
The FP-growth identifies long frequent patterns by searching through smaller conditional pattern base repeatedly. In this way, the cost of searching frequent patterns is substantially reduced.

\begin{figure*}[tb]
\centerline{\includegraphics[width=0.95\textwidth]{./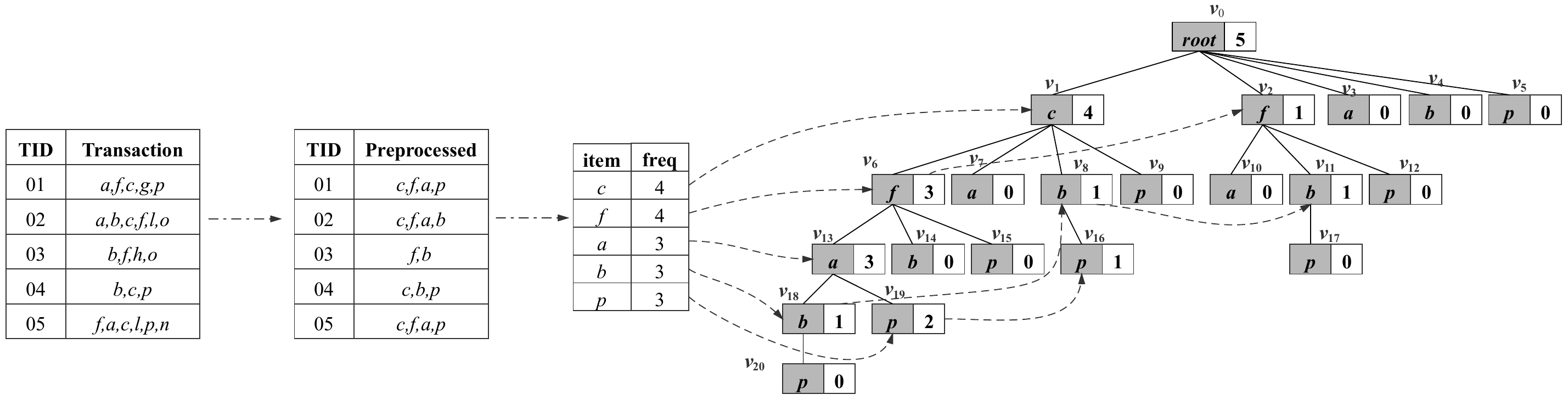}}
\caption{An example of frequent pattern tree (FP-tree).}
\label{fptree}
\end{figure*}

\section{Problem Overview }
\label{problem}
\subsection{Problem Definition}
In this paper, we focus on the task of discovering top-$k$ frequent itemsets over transactional (or set-valued) data in the context of LDP, where each user's input is a set of items, i.e., a transaction. This is  challenging due to the complicated transactional data as well as the exponentially growing potential itemsets. Formally, let $\mathcal{X} = \{x_1,x_2,...,x_d\}$ be the domain of $d$ distinct items. An itemset $X$ is a subset of $\mathcal{X}$, i.e., $X \subseteq \mathcal{X}$. Suppose there are $n$ users, the transaction of $i$th user is $t_i(i \in [1,n])$ and $\mathcal{T} = \langle t_1,t_2,...,t_n \rangle$ denotes the whole transactional database. The frequency of any itemset $X$ is the number of transactions containing $X$ in $\mathcal{T}$. That is,
$$f(X) = |\{ t_i | X \subseteq t_i , t_i \in \mathcal{T} \}|$$

Specifically, the untrusted analyst wants to discover the itemsets that occur most frequently. To control the size of output, it gives a minimum frequency threshold $\delta$ to output all itemsets whose frequency exceeds $\delta$ or a positive number $k$ to output the $k$ most frequent itemsets. In this paper, we focus on discovering the top-$k$ itemsets with the highest frequencies. Table \ref{notations} lists the main notations used in this paper.

\begin{table}[tb]
\caption{Notations.}\label{notations}
\begin{center}

\begin{tabu}{cl}\tabucline[1pt]{-}
  Symbol&Description \\\hline
  $\mathcal{T} = \langle t_1,t_2,...,t_n \rangle$ & the database of $n$ transactions \\
  $\mathcal{X} = \{x_1,x_2,...,x_d\}$ & the domain of $d$ distinct items\\
  $X$ & an itemset, $X \subseteq \mathcal{X}$ \\
  $f(\cdot),\tilde{f}(\cdot)$ & the actual and estimated frequency \\
  $\mathbb{G}(X)$ & the guessing frequency of $X$ \\
  $S^{\prime}$ &the frequent items set \\
  $\hat{\mathcal{N}}$ & the noisy FP-tree \\
  $M$ & the maximum level of the FP-tree\\
  $\dagger$ & the dummy value\\
  $\tilde{\mathcal{P}}$ & the set of frequent itemsets identified \\
\tabucline[1pt]{-}
\end{tabu}

\end{center}
\end{table}

\subsection{The SVSM Solution}\label{svim and svsm}
As far as we know, the state-of-the-art work to address the FIM task under LDP is the SVSM (Set-Value itemSet Mining) protocol\cite{a2}. Particularly, SVSM mined top-$k$ itemsets based on the $k$ most frequent items obtained by the SVIM protocol\cite{a2}. More details are as follows.

\textbf{Set-Value Item Mining (SVIM)}: The SVIM protocol focuses on discovering the $k$ most frequent items. It is in fact the PSFO protocol with the same problem setting as the LDPMiner\cite{a1}. SVIM divides all users into three mutually disjoint groups $G_1$, $G_2$ and $G_3$, and has three steps as follows.

\emph{Step 1. Prune domain $-\ G_1$.} Each user applies the OLH protocol to perturb one item randomly sampled from its input, i.e., $\operatorname{PS}_{l=1} (t)$. Then, the analyst estimates the frequency of each value in the original domain $\mathcal{X}$ and selects $2k$ items with the highest estimated frequencies as pruned domain $S$. The analyst broadcasts $S$ to all users, who prune their transactions by intersecting them with domain $S$.

\emph{Step 2. Size estimation $- \ G_2$.} Since each user possesses at most $L$ items when using padding and sampling (PS) protocol \cite{a1}, the selection of an appropriate $L$ is crucial. The basic strategy is to collect the length distribution of users and select a suitable $L$ in a private way. Specifically, each user $i$'s input is the number of items in its pruned transaction, i.e., $|t_i \cap S|$, and then given a fraction $\tau$, $L$ is computed as the smallest $l \in \{1,2,\ldots ,2k\}$ that satisfies
\begin{equation}
\frac{\sum_{j=1}^{l} \tilde{f}(j)}{\sum_{j=1}^{2k} \tilde{f}(j)} > \tau \label{length}
\end{equation}
For example, the 90th percentile length is the $l$ value with $\tau = 0.9$.

\emph{Step 3. Frequency estimation $- \ G_3$.} Once given $S$ and $L$, the PSFO protocol is applied to precisely estimate the frequencies of items in the pruned domain $S$. Firstly, each user $i$ inputs one item randomly sampled from its pruned transaction padded to length $L$, i.e., $PS_{l=L} (t_i \cap S)$. Then, the item frequencies are estimated according to PSFO, and the $k$ most frequent items are selected. 

SVIM cannot be used directly to mine itemsets, since the exponential growth of potential itemsets would incur to much noise in step 1. Therefore, SVSM protocol is proposed with the core technique named ``Guessing Frequency (GF)'' to construct a pony-size domain of itemsets which are likely to be frequent.

\textbf{Set-Value itemSet Mining (SVSM)}: Let $S^{\prime} = \{x_1,x_2,$ $...,x_k\}$ denote the set of top-$k$ items returned by SVIM. Then, the guessing frequency of each potential itemset $X \subseteq S^{\prime}$ is computed by \eqref{gf}, and the pony-size domain set $\textit{IS}$ of itemsets is constructed by selecting $2k$ itemsets with highest guessing frequencies.

\begin{Definition}
(Guessing Frequency, GF). Let an itemset $X$ be a subset of a set of known items $I = \{{x_1,x_2,...,x_m} \}$, i.e. $X \subseteq I$. The frequency of the $i$th item $x_i$ is denoted by $f(x_i)$. The guessing frequency of $X$ is $\mathbb{G}(X)$, defined as follows:\\
\begin{equation}\label{eq:guessfreq}
\mathbb{G}(X) = \prod_{x_i \in X} \frac{\gamma \times f(x_i)}{\max (f)} .
\end{equation}
where $\max (f)$ is the maximum frequency of all items, and $0 \leq \gamma \leq 1$ is a predifined parameter.
\label{guess fre}
\end{Definition}

Once the domain of potential itemsets is pruned, SVSM uses the subsequential steps of SVIM to identify frequent itemsets. The major difference is that the input of each user $i$ is a set of itemsets contained by both $i$'s input transaction and $\textit{IS}$, that is, $\textit{tx}_i = \{ X | X\in \textit{IS}, X\subseteq t_i\}$.

\section{LDP-FPMiner}
\label{fpmine sum}
In this section, we present the LDP-FPMiner that discovers the $k$ most frequent itemsets under LDP. The main idea is to construct a noisy FP-tree, which allows the untrusted analyst to mine itemsets privately. In the following, Section \ref{fpmine} overviews the scheme. Section \ref{construct and mine} and Sections \ref{cutdown candidate} describe the details of constructing a noisy FP-tree. Section \ref{sec:mining} outlines the mining of the noisy FP-tree, and \ref{sec:analysis} provides theoretical analysis of the scheme.

\subsection{Overview}
\label{fpmine}

The overview of LDP-FPMiner is depicted by Figure~\ref{fig:overview}. It has three steps: First, a set of frequent items is identified with the SVIM protocol. Second, OLH protocol is applied to approximate the maximum number of frequent items that users hold, that is the height of the FP-tree. Third, it constructs a noisy FP-tree with LDP in breadth-first order, and mines it for frequent itemsets by the FP-growth algorithm \cite{fp}. The overall procedure is presented in Algorithm \ref{alg fpmine}.

\begin{figure}[tb]
\centerline{\includegraphics[width=0.45\textwidth]{./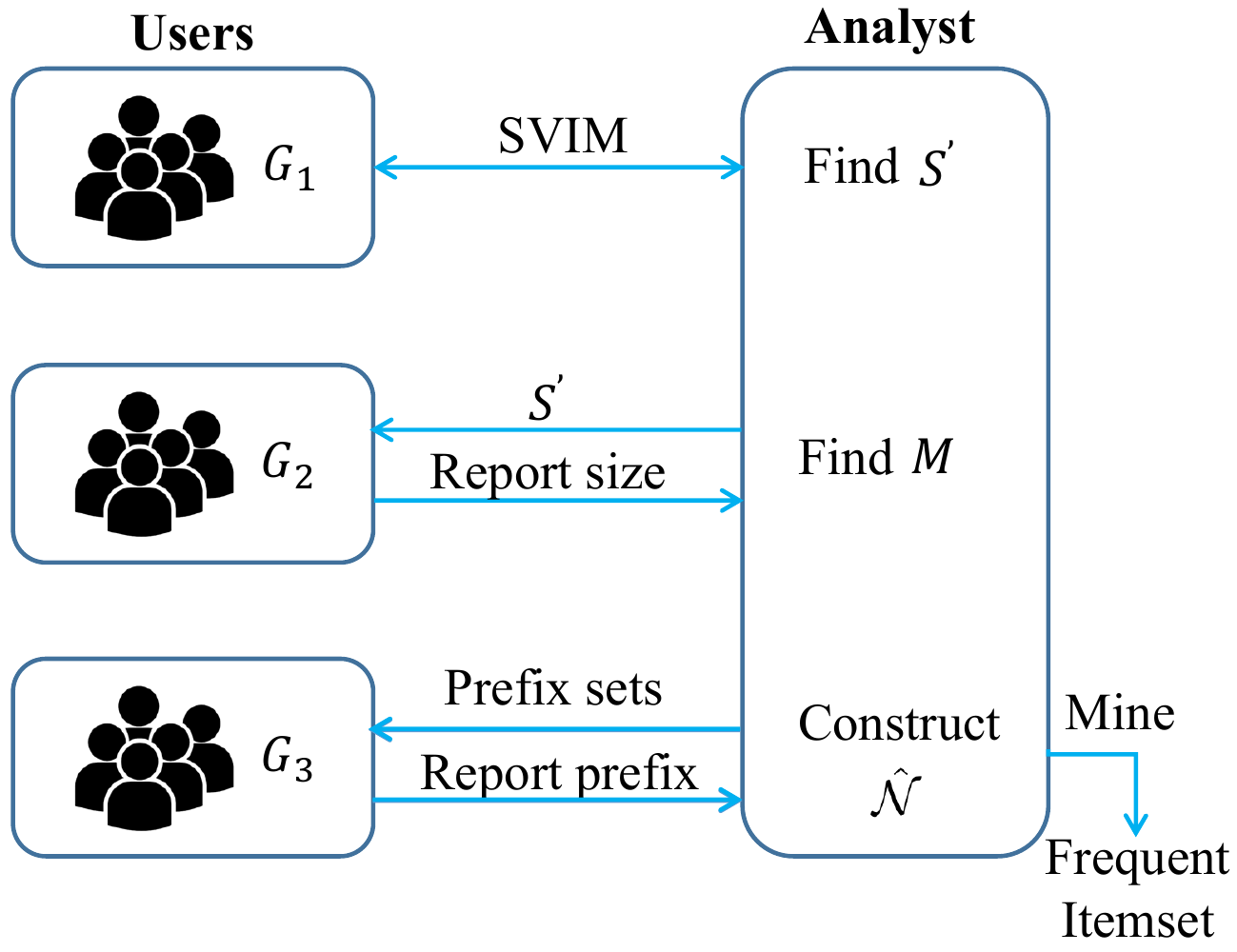}}
\caption{The Overview of LDP-FPMiner.}
\label{fig:overview}
\end{figure}

\begin{algorithm}[hb]
  \caption{LDP-FPMiner($\mathcal{T},\mathcal{X},k,\epsilon$)}
  \label{alg fpmine}
  \begin{algorithmic}[1]
  \STATE Randomly divide $\mathcal{T}$ into three groups $G_1,G_2,G_3$; \label{fpmine group}

  \textbf{Find Frequent Items}:
  \STATE Collect the items set $S^{\prime} \gets \operatorname{SVIM}(G_1,k,\epsilon)$; \label{fpmine items}

  \textbf{Find Tree Height:}
  \STATE Each user in $G_2$ perturbs the number of frequent items it holds with $\operatorname{OLH}({\epsilon})$;\label{fpmine length first}
  \STATE Compute the 80th percentile length $M$ by \eqref{length};\label{fpmine Lm}

  \textbf{Find Frequent Itemsets:}
  \STATE $\hat{\mathcal{N}} \gets \operatorname{ConstructNoisyTree}(G_3,S^{\prime},M,\epsilon)$;
  \STATE Mine $\hat{\mathcal{N}}$ and release the $k$ itemsets set $\tilde{\mathcal{P}}$; \label{fpmine mine}
  \RETURN $\tilde{\mathcal{P}}$
  \end{algorithmic}
\end{algorithm}

Note that, in the first step, SVIM protocol (line \ref{fpmine items} of Algoritm~\ref{alg fpmine}) is used to find the frequent items and their frequencies. The former are used to find the frequent itemsets in later two steps, while the latter are used to produce guessing frequencies for optimizations. In the second step, we set $M$ empirically as the 80th percentile of the number of frequent items that user holds.

As described in Section \ref{svim and svsm}, the SVSM protocol uses the PSFO protocol as a building block and constructs exponentially growing candidate itemsets. In this section, by leveraging the FP-tree approach, we aim to mine frequent itemsets under LDP without a costly candidate generation process. Moreover, by making use of the FP-tree structure, we want to reduce the scale of noise added, and efficiently identify frequent itemsets with a high accuracy. However, there are three challenges as follows.

The first challenge is that in LDP setting the analyst must construct an FP-tree over sanitized data. There is no raw user data available to calculate accurate frequencies. Particularly, the original FP-growth \cite{fp} constructs a non-privacy FP-tree by scanning raw transactions and updating nodes in the depth-first order. But this cannot be implemented under LDP since the frequencies of patterns are not directly available. To overcome this, we construct a noisy FP-tree in the breadth-first order based on the fundamental FO protocol (e.g. OLH). Given a maximum level $M$ collected privately (explained in Section \ref{construct and mine}), our construction algorithm queries the users about pattern frequencies for each level, and then constructs the noisy FP-tree level by level.

The second challenge is that the number of patterns generated at an FP-tree level is exponentially explosive. Querying the users about such explosive number of pattern frequencies will soon become infeasible, and severely degrade the accuracy due to a large amount of noise injected. To address this issue, we cut down the number of candidate patterns at each level to no more than $\xi k$, and expand only these patterns in the FP-tree.
Therefore, we design the function $\operatorname{CutDownCandidate}$ (Algorithm \ref{alg:cutdown candidate} in Section \ref{cutdown candidate}) to construct a pony-size candidate set of nodes that is likely to be frequent.

Finally, how to optimize a noisy FP-tree based on the tree structure is challenging. There are frequency consistency constraints between FP-tree nodes that can be used. For example, the sum of the counts of children nodes should be equal to or less than the count of the parent node. However, since we cut down a part of nodes during construction and thus the resulted FP-tree is not complete, whether these constraints can be used should be carefully examined. Additionally, we also optimize an noisy FP-tree by using guessing frequencies to adjust the frequency evaluation in FP-tree construction and mining.

\subsection{Constructing a Noisy FP-tree}
\label{construct and mine}

In the non-privacy setting, FP-trees are originally constructed in depth-first order \cite{fp}. In the LDP setting, since pattern frequencies are not directly available and should be queried privately, the depth-first order requires too many frequency queries, and would quickly consume the privacy budget. Lee et al.\cite{a5} also built noisy FP-trees to derive itemset frequencies in the centralized differential privacy (CDP) setting, where raw data is accessible. To the best of our knowledge, we are the first to propose an approach to constructing a FP-tree with LDP.

Specifically, we construct locally differentially private FP-trees in the breadth-first order. Let $M$ be the maximum level of the tree, and $x_1 \succ x_2 \succ \cdots  \succ x_k$ be $k$ frequent items in frequency descending order, which consist a set $S^{\prime}$. Our construction method has the following two phases:

\textbf{Phase \uppercase\expandafter{\romannumeral1}: Preprocessing.} All users prune their transactions, remain only the frequent items and rearrange them in the frequency descending order. For instance, Fig. \ref{fptree} shows the five preprocessed transactions when the frequent items are rearranged into the frequency descending order $c\succ f\succ a\succ b\succ p$.

Notably, after preprocessing, massive non-frequent items are pruned. This significantly cuts down the number of candidate itemsets, and thus improves the performance. The underlying basis is the Apriori property\cite{apr}: \emph{only if the length-$\alpha$ itemset is frequent are its length-($\alpha +1$) supersets likely to be frequent}.

\textbf{Phase \uppercase\expandafter{\romannumeral2}: Constructing noisy FP-tree.} We construct a noisy FP-tree with LDP in breadth-first order as described in Algorithm \ref{alg ConstructNoisyTree}.

First, initialization is done as follows. (1) All users are divided into $M$ equal groups (line \ref{ConstructNoisyTree group}), each of which is used for frequency query at a level of the FP-tree. (2) Each user prunes her transaction in term of the set of frequent items $S^{\prime}$, and sorts the frequent items properly for the later FP-tree construction (line \ref{ConstructNoisyTree preprocess}). (3) The noisy FP-tree $\hat{\mathcal{N}}$ is initialized with a valid root holding a count $n_g$ (line \ref{ConstructNoisyTree tree}), and the root can be regarded the $0$-th level of the tree having a prefix candidate set $C_0$.

Then, the noisy FP-tree is constructed in the breadth-first order as below. (1) For each level $l$, the nodes at this level are added and their corresponding prefixes are constructed as the candidate set $C_l$ (line \ref{ConstructNoisyTree prefix set} - \ref{ConstructNoisyTree prefix set constructed}). Subsequently, the $\operatorname{CutdownCandidate}$ (line \ref{ConstructNoisyTree cutdown}) is invoked to cut down $C_l$ into small size $\xi \cdot k$ (explained in Section \ref{cutdown candidate}). (2) For each level $l$, the users in the corresponding group perturb their inputs in term of the prefix candidate set $C_l$ with the OLH protocol, and the analyst computes an estimate count $\tilde{f}(v)$ for each node $v$ (line \ref{ConstructNoisyTree estimate}-\ref{ConstructNoisyTree collect}). The nodes with negative estimated counts are updated with 0 counts (line \ref{ConstructNoisyTree update}). The algorithm repeats level by level until it reaches the maximum level $M$. Finally, the algorithm returns the noisy FP-tree $\hat{\mathcal{N}}$.

In the following, we remark on the main points of the noisy FP-tree construction algorithm $\operatorname{ConstructNoisyTree}$.

\begin{itemize}
\item A node $v$ in the noisy FP-tree has two fields: $v.item$ and $v.count$, where $v.item$ denotes the indicated item of node $v$ and $v.count$ represents the count of times its prefix $\bar{p}_v$ appears in database, respectively. For example, in Fig. \ref{fptree}, the node $v_6$ indicates the item $f$ (shown in shaded box) and $v_6.count = 3$ means the prefix $(c,f)$ appears 3 times, i.e., users T01,T02 and T05 includes this prefix in their transactions.

\item Both $S^{\prime}$ and $M$ are obtained with $\epsilon$-LDP. Meanwhile, since we construct $\hat{\mathcal{N}}$ in breadth-first order, each level of the tree is completely dependent on the previous level (line \ref{ConstructNoisyTree prefix set}-\ref{ConstructNoisyTree prefix set constructed}), which is collected with LDP (line \ref{ConstructNoisyTree estimate}-\ref{ConstructNoisyTree collect}). Therefore, the noisy FP-tree $\hat{\mathcal{N}}$ does not disclose any privacy of the specific transaction.

\item The input of each user at level $l$ ($1\leq l \leq M$) (line \ref{ConstructNoisyTree perturbe}) is a prefix in $C_l$ or the dummy value $\dagger$ (if her first $l$ items does not exist in $C_l$). We apply OLH protocol with the finite domain $C_l \cup \dagger$ to gather information. For example, when $l=3$ and $C_3 = \{\bar{p}_{v_{13}}, \bar{p}_{v_{14}}, \bar{p}_{v_{15}}, \bar{p}_{v_{16}}, \bar{p}_{v_{17}}\}$ in Fig. \ref{fptree}, the input of user T01 is $\bar{p}_{v_{13}} = (c,f,a)$, which is her first three preprocessed items, while that of T03 is the dummy value $\dagger$.

\item We cut down the size of the domain set (line \ref{ConstructNoisyTree cutdown}) as well as filter out the nodes with negative counts (line \ref{ConstructNoisyTree update}) on each iteration to improve accuracy, which will be explained in Section \ref{cutdown candidate}.

\item We randomly divide users into $M$ equal-sized groups and users in each group use the full privacy budget $\epsilon$. Meanwhile, the estimated count of node $v$ should multiply $M$ to correct the underestimation. It has turned out that the overall process achieves better accuracy and satisfies $\epsilon$-LDP as well.
\end{itemize}

\begin{algorithm}[t]
  \caption{ConstructNoisyTree($G_3,S^{\prime},M,\epsilon$)}
  \label{alg ConstructNoisyTree}
  \begin{algorithmic}[1]

  \STATE \textbf{// Initialize:}
  \STATE Randomly divide users into $M$ groups $g_1,g_2,...,g_{M}$ of the same size $n_g = \lfloor \frac{|G_3|}{M} \rfloor$; \label{ConstructNoisyTree group}
  \STATE Each user prune her items not in $S^{\prime}$ and rearrange left frequent items in frequency descending order;\label{ConstructNoisyTree preprocess}
  \STATE Initialize tree $\hat{\mathcal{N}}$ with a root $v_r$, and set $v_r.count = n_g$;\label{ConstructNoisyTree tree}
  \STATE Mark $v_r$ as valid, and set $C_0 = \{v_r\}$;

  \FOR{$l=1$ to $M$} \label{ConstructNoisyTree begin loop}

    \STATE \textbf{// Generate Candidates:}
    \WHILE{there is a valid $v \in C_{l-1}$ and $v.count > 0$} \label{ConstructNoisyTree prefix set}
      \STATE Initialize a candidate prefix set $C_l = \emptyset$;
      \STATE Mark $v$ as invalid;
      \FOR{each item $x \in S^{\prime}$ and $v.item \succ x.item $} \label{ConstructNoisyTree add child}
        \STATE Add a child $v_c$ of $v$ with item $x$ and count 0;\label{ConstructNoisyTree child}
        \STATE Mark $v_c$ as valid and obtain its prefix $\bar{p}_{v_c}$;\label{ConstructNoisyTree child prefix}
        \STATE $C_l \gets C_l \cup \bar{p}_{v_c}$;\label{ConstructNoisyTree child prefix set}
      \ENDFOR
    \ENDWHILE \label{ConstructNoisyTree prefix set constructed}
	\STATE $C_l \gets \operatorname{CutdownCandidate}(S^{\prime},C_l,\xi)$;\label{ConstructNoisyTree cutdown}

    \STATE \textbf{// Query Frequencies:}
    \FOR{each user in group $g_l$} \label{ConstructNoisyTree estimate}
    \STATE Perturbe her input ( i.e., first $l$ items) with OLH; \label{ConstructNoisyTree perturbe}
    \ENDFOR \label{ConstructNoisyTree ldp}

    \STATE Collect the estimated count $\tilde{f}(v)$ of each node $v$ at level $l$ using the domain $C_{l} \cup \dagger$; \label{ConstructNoisyTree collect}

    \STATE Update nodes by coverting all negative counts to 0;\label{ConstructNoisyTree update}

  \ENDFOR \label{ConstructNoisyTree end loop}

  \RETURN The noisy FP-tree $\hat{\mathcal{N}}$.

  \end{algorithmic}
\end{algorithm}

\subsection{Cutting Down Cadidate Prefix Set}
\label{cutdown candidate}
Recall that, during the noisy tree construction, the size of initial candidate prefixes set $C_l$ at each level $l$ soon becomes very large (e.g. thousands or more). This would degrade the accuracy greatly. According to Algorithm~\ref{alg ConstructNoisyTree}, this size is expanded once at each level, with a maximal factor $k$. Our idea is to cut down the size at each level by pruning candidate prefixes when the size exceeds a certain value (i.e., $\xi k$). Although this cutdown may cause information loss and lead to underestimation, it overcomes the size expansion issue and works experimentally well.

Another reason to cutdown candidate prefixes is that there are many redundant prefixes. For example, at the level-3 in Fig. \ref{fptree}, the set $C_3$ is initially $\{\bar{p}_{v_{6}}, \bar{p}_{v_{7}}, \bar{p}_{v_{8}}, \bar{p}_{v_{9}}, \bar{p}_{v_{10}}, \bar{p}_{v_{11}}, \bar{p}_{v_{12}}\}$. However, as the counts are collected, half of the prefixes (i.e., $\bar{p}_{v_{7}}, \bar{p}_{v_{9}},\bar{p}_{v_{10}}, \bar{p}_{v_{12}}$) should be pruned. Thus, if we can prune these meaningless nodes in advance, then we can effectively reduce the noise added.

Specifically, inspired by SVSM\cite{a2}, we limit the candidate prefix set within a fixed size $\xi \cdot k$ in term of temporal guessing frequencies, as shown in Algorithm \ref{alg:cutdown candidate}. Here, $\xi$ is an adjustable parameter. The temporal guessing frequency $\mathbb{T} (\bar{p}_v)$ of each candidate prefix $\bar{p}_v \in C$ is computed as follows.
\begin{equation}
\mathbb{T} (\bar{p}_v) = \tilde{f}(\bar{p}_{\operatorname{parent}(v)}) \cdot \bar{f}(v)
\end{equation}
where $\tilde{f}(\bar{p}_{\operatorname{parent}(v)})$ denotes the estimated frequency of the prefix of $v$'s parent node, and $\bar{f}(v)$ is a probability computed  based on the normalized estimated frequencies of frequent items. Assuming that the items of node $v$ and its parent node are the $(i+j)$th and $i$th frequent items $x_{i+j}$ and $x_i$, respectively, then $\bar{f}(v)$ can be computed by Eq.~\eqref{eq:xi_probability}.

\begin{equation}\label{eq:xi_probability}
\bar{f}(v) =\tilde{f}(x_{i+j}) \prod_{t=1}^{t=j-1}(1-\tilde{f}(x_{i+t}))
\end{equation}

Note that the computation of Eq.~\eqref{eq:xi_probability} is due to the FP-tree structure. A node $v$ appears as a child of its parent only if all frequent items ranked between its parent node and itself do not appear.

Then the $\xi k$ candidate prefixes with highest temporal guessing frequencies are selected to form the pony-size set $C^{\prime}$. The intuition is that a prefix with a high estimated frequency is more likely to be split with frequent items. Note that the temporal frequency of each candidate prefix depends only on the frequencies queried previously and can be computed within $O(1)$ time.

\begin{algorithm}[t]
\caption{CutdownCandidate($S^{\prime},C,\xi$)}
\label{alg:cutdown candidate}
\begin{algorithmic}[1]

\STATE Initialize $C^{\prime}$;
\IF{$|C| > \xi \cdot k$}
  \FOR{each candidate prefix $\bar{p}_v \in C$}
    \STATE Compute the temporal guessing frequency $\mathbb{T} (\bar{p}_v)$;
  \ENDFOR
  \STATE Construct $C^{\prime}$ by selecting the $\xi k$ prefixes with highest temporal guessing frequencies;
\ELSE
  \STATE $C^{\prime} \gets C$

\ENDIF

\RETURN $C^{\prime}$
\end{algorithmic}
\end{algorithm}

\subsection{Mining a Noisy FP-tree}
\label{sec:mining}

So far, we have obtained a noisy FP-tree satisfying LDP. The final step for this scheme is to mine this noisy FP-tree for frequent itemsets. This mining procedure is completely a post-process, and it is the same as that of no-privacy setting. Namely, it follows the original FP-growth algorithm, which has been outlined in Section~\ref{sec:fptree}, and more details can be found in \cite{fp}. 

\subsection{Theoretical Analysis}
\label{sec:analysis}

\subsubsection{Computational Complexity}

The computational complexity of constructing a noisy FP-tree is given in Theorem~\ref{fp computational complexity}.

\begin{theorem}
The computational complexity of constructing a noisy FP-tree for LDP-FPMiner is $O(k^3)$.\label{fp computational complexity}
\end{theorem}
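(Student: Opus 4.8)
The plan is to bound the total work done by Algorithm~\ref{alg ConstructNoisyTree} by analyzing the cost of a single level of the breadth-first construction and then multiplying by the number of levels $M$. First I would observe that $M$ is the 80th percentile of the number of frequent items a user holds, and since the frequent item set $S'$ has size $k$, we have $M \le k$; thus it suffices to show that processing one level costs $O(k^2)$. Within a level $l$, the candidate set $C_{l-1}$ from the previous level has size at most $\xi k$ (after the cutdown), and the ``Generate Candidates'' loop expands each valid node by at most $k$ children (one for each item ranked after $v.item$ in $S'$), so the raw candidate set $C_l$ has size at most $\xi k \cdot k = O(k^2)$; constructing each child node and its prefix is $O(1)$ amortized work (the prefix of a child extends the parent's prefix by one item), giving $O(k^2)$ for candidate generation.

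Next I would account for the $\operatorname{CutdownCandidate}$ call: as noted after Eq.~\eqref{eq:xi_probability}, each temporal guessing frequency $\mathbb{T}(\bar p_v)$ is computable in $O(1)$ time because it reuses previously queried frequencies, so scoring all $O(k^2)$ candidates costs $O(k^2)$, and selecting the top $\xi k$ of them costs $O(k^2 \log k)$ with sorting (or $O(k^2)$ with linear-time selection) — in any case polynomial in $k$ of degree at most $2$ up to logarithmic factors. Then the ``Query Frequencies'' phase runs the OLH frequency oracle over the domain $C_l \cup \{\dagger\}$ of size $O(k)$ for the $n_g = \lfloor |G_3|/M \rfloor$ users in group $g_l$; the per-user perturbation and the aggregation to obtain $\tilde f(v)$ for each of the $O(k)$ nodes is a standard FO computation whose cost I would treat as dominated by, or folded into, the $n$-dependent data-collection cost rather than the $k$-dependent structural cost that the theorem isolates. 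Converting negative counts to zero (line~\ref{ConstructNoisyTree update}) is $O(k)$.

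Summing over the $l = 1, \dots, M$ levels with $M \le k$, the total structural cost is $M \cdot O(k^2) = O(k^3)$, which is the claimed bound. I would state explicitly that the theorem concerns the computational complexity in terms of the itemset-mining parameter $k$ (the size of the pruned item domain), with the number of users $n$ and the privacy-budget-dependent hashing parameters suppressed, since those contribute only a linear-in-$n$ term from the per-user perturbations that is common to all LDP frequency-oracle protocols.

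The main obstacle, and the point I would be most careful about, is justifying that $C_l$ never exceeds $O(k^2)$ \emph{before} the cutdown: this hinges on the fact that the previous level was already cut down to $\xi k$ and that each node spawns at most $k$ children, so one must confirm that the cutdown in line~\ref{ConstructNoisyTree cutdown} is applied at every level including level $1$ (where $C_0 = \{v_r\}$ gives at most $k$ children, trivially within bound). A secondary subtlety is the amortized $O(1)$ cost of prefix construction: if prefixes were copied wholesale at each node the cost per node would be $O(l) = O(k)$, inflating a level to $O(k^3)$ and the total to $O(k^4)$; the bound as stated presumes prefixes are represented incrementally (a child stores a pointer to its parent's prefix plus one item), which I would note explicitly. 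Everything else is routine bookkeeping over the nested loops.
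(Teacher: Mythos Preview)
Your proposal is correct and follows essentially the same approach as the paper: bound $M \le k$, observe that after the cutdown each level retains $O(k)$ nodes so the next level generates $O(k^2)$ children, and multiply by $M$ levels to get $O(k^3)$. You are in fact more careful than the paper, which does not address the $n$-dependent OLH aggregation cost, the cost of $\operatorname{CutdownCandidate}$, or the prefix-representation subtlety you flag; the paper's proof simply takes the candidate-generation step as the dominant $O(k^2)$ cost per level and stops there.
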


\begin{proof}
The computational complexity is dominated by the main loop in Algorithm \ref{alg ConstructNoisyTree} (line \ref{ConstructNoisyTree begin loop}-\ref{ConstructNoisyTree end loop}), which terminates in $M$ iterations. Since there are $k$ frequent items, the length of preprocessed transactions is not greater than $k$, and so we have $M \leq k$. Furthermore, for each iteration, one candidate prefix set is constructed (line \ref{ConstructNoisyTree prefix set}-\ref{ConstructNoisyTree cutdown}) as the domain to further gather estimations (line \ref{ConstructNoisyTree estimate}-\ref{ConstructNoisyTree ldp}). Since the set is then pruned into size $O(k)$, the number of nodes reserved is $O(k)$ at each level and may generate $O(k(k-1)) = O(k^2)$ children nodes. Therefore, the computational complexity is $O(M k^2) = O(k^3)$.
\end{proof}

\subsubsection{Estimation Accuracy}

The estimation accuracy of a prefix frequency is given in Theorem~\ref{the:accuracy}.

\begin{theorem}\label{the:accuracy}
For any node $v$ in the FP-tree, let $\tilde{f}(v)$ be the estimated count of its prefix $\bar{p}_{v_c}$ collected privately in Algorithm \ref{alg ConstructNoisyTree}, and $f(v)$ be the actual count. When the number of users participating at each level is $n_g$, and there are $M$ levels (excluding the 0-level, i.e., root node ), with at least $1-\beta$ probability, we have
\begin{equation}
\max{|\tilde{f}(v) - f(v)|} = O \left(\frac{ \sqrt{n_g} \sqrt{ \log (1/\beta)}}{\epsilon} \right).
\end{equation}
\end{theorem}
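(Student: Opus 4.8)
The plan is to reduce the claim to a standard concentration inequality for the OLH oracle and then take a union bound over the nodes that the construction algorithm ever queries. Fix a level $l$ and a node $v\in C_l$. By Algorithm~\ref{alg ConstructNoisyTree}, $\tilde f(v)$ is obtained by running OLH with budget $\epsilon$ on the $n_g$ users of group $g_l$ over the domain $C_l\cup\{\dagger\}$, so by Eq.~\eqref{olh aggregate} it is the affine image $\tilde f(v)=\frac{\mathds{1}_v-n_g/g}{p-1/g}$ of the support count $\mathds{1}_v=\sum_{i=1}^{n_g}Y_i$, where $Y_i=\mathds{1}[H_i(\bar{p}_v)=y_i]\in\{0,1\}$ records whether user $i$'s perturbed report supports $v$. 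The $Y_i$ are mutually independent because distinct users draw independent hash functions and independent perturbation coins, $\tilde f(v)$ is unbiased with $\mathbf{E}[\tilde f(v)]=f(v)$, and $\operatorname{Var}[\tilde f(v)]=n_g\cdot\frac{4e^\epsilon}{(e^\epsilon-1)^2}$ by Eq.~\eqref{olh variance}.

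First I would apply a Hoeffding bound to $\mathds{1}_v$, which is a sum of $n_g$ independent $[0,1]$-valued terms: $\mathbf{Pr}\big[|\mathds{1}_v-\mathbf{E}[\mathds{1}_v]|\ge t\big]\le 2e^{-2t^2/n_g}$, so with probability at least $1-\beta'$ one has $|\mathds{1}_v-\mathbf{E}[\mathds{1}_v]|=O(\sqrt{n_g\log(1/\beta')})$. (A Bernstein bound gives the same leading term plus an additive $O(\log(1/\beta'))$ that is harmless once $n_g=\Omega(\log(1/\beta'))$; Hoeffding avoids even that.) With the optimal hash range $g=\lceil e^\epsilon+1\rceil$ one computes $p-1/g=\frac{(g-1)(e^\epsilon-1)}{g(e^\epsilon+g-1)}=\Theta(\epsilon)$ in the high-privacy regime, and dividing through gives, for a fixed node $v$ and with probability at least $1-\beta'$,
\[
|\tilde f(v)-f(v)|=\frac{|\mathds{1}_v-\mathbf{E}[\mathds{1}_v]|}{p-1/g}=O\!\left(\frac{\sqrt{n_g}\,\sqrt{\log(1/\beta')}}{\epsilon}\right).
\]

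It remains to make the bound uniform over all queried nodes. Over the $M$ levels the algorithm estimates the count of at most $M(\xi k+1)=O(Mk)$ distinct nodes, since CutdownCandidate (Algorithm~\ref{alg:cutdown candidate}) caps each candidate set at $\xi k$; taking $\beta'=\beta/\big(M(\xi k+1)\big)$ and a union bound yields, with probability at least $1-\beta$, $\max_v|\tilde f(v)-f(v)|=O\!\big(\sqrt{n_g}\,\sqrt{\log(Mk/\beta)}/\epsilon\big)$, and the $\log(Mk)$ term is absorbed into the $O(\cdot)$ once $M$ and $k$ are regarded as fixed (recall $M\le k$ by Theorem~\ref{fp computational complexity}), giving the stated $O\!\big(\sqrt{n_g}\sqrt{\log(1/\beta)}/\epsilon\big)$. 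I would also note that the statement only concerns the direct deviation of each queried estimate from the true count on the group that measured it, so no propagation of errors between successive levels needs to be tracked.

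The main obstacle is the step from the variance identity~\eqref{olh variance} to a genuine high-probability deviation bound: it requires exhibiting $\tilde f(v)$ as an affine function of a sum of independent bounded summands so that a Hoeffding- or Bernstein-type tail inequality applies, and then checking that the normalizing factor $p-1/g$ is $\Theta(\epsilon)$ for the optimal $g$ so that the error scales as $1/\epsilon$ (rather than $1/(e^\epsilon-1)$) as claimed; the small-$\epsilon$ convention is used implicitly here. The union-bound part is routine but hinges on the explicit fact—guaranteed by the cut-down procedure—that only $O(Mk)$ counts are ever estimated.
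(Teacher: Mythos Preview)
Your proof is correct and follows essentially the same approach as the paper: write $\tilde f(v)-f(v)$ as an affine image of a sum of $n_g$ independent bounded per-user indicators and apply a concentration inequality (the paper uses Bernstein, you use Hoeffding, and both give the same leading order once one observes $p-1/g=\Theta(\epsilon)$). The only notable difference is that you add a union bound over the $O(Mk)$ queried nodes to justify the ``$\max$'' in the statement, whereas the paper's own proof establishes only the single-node deviation bound and never takes that final step.
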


\begin{proof}
according to \eqref{olh aggregate},
\begin{equation}\nonumber
	\begin{aligned}
\left |\tilde{f}(v) - f(v) \right| & = \left| \sum_{j=1}^{n_g} \frac{\tilde{v}^j - 1 / g}{p - 1/g} - \sum_{j=1}^{n_g}\frac{v^j - 1 /g}{p - 1/g} \right| \\
                      & = \left| \sum_{j=1}^{n_g}  \frac{\tilde{v}^j - v^j}{p - 1/g} \right| .\\
	\end{aligned}
\end{equation}
where $g = e^{\epsilon}+1$ and $p = \frac{e^{\epsilon}}{e^{\epsilon} + g -1} = 1/2$ are the optimal probability setting in \cite{a8}. Random variable $\tilde{v}^j$ and $v^j$ denote the estimated and actual count of the $j$th user, respectively.

Accordingly, we have
$$Var\left[\tilde{v}^j - v^j \right] = O \left(\frac{1}{\epsilon^2} \right),$$
$$ \left| \frac{\tilde{v}^j - v^j}{p - 1/g} \right| \leq \left| \frac{1}{p - 1/g} \right| = \frac{2(e^\epsilon + 1)}{e^\epsilon - 1}.$$

By Bernstein's inequality,
\begin{equation}\nonumber
\begin{aligned}
&Pr\left[\left |\tilde{f}(v) - f(v) \right| \geq  \lambda \right] \\
& =  Pr \left[\left| \sum_{j=1}^{n_g} \frac{\tilde{v}^j - v^j}{p - 1/g} \right|  \geq \lambda \right]\\
& \leq 2 \times exp \left \{- \frac{ \frac{\lambda^2}{ 2}}{\sum_{j=1}^{n_g}Var \left[ \frac{\tilde{v}^j - v^j}{p - 1/g} \right] + \frac{\lambda}{3}\cdot \frac{2(e^\epsilon + 1)}{e^\epsilon - 1} } \right\} \\
& = 2 \times exp \left \{- \frac{\lambda^2 / 2}{n_g \cdot O\left(\frac{1}{\epsilon^2}\right) + \lambda \cdot O\left(\frac{1}{\epsilon} \right)}  \right\}.
\end{aligned}
\end{equation}

Therefore, $\left |\tilde{f}(v) - f(v) \right| <  \lambda$ holds with at least $1-\beta$ probability while $\lambda = O\left(\frac{\sqrt{n_g} \cdot \sqrt{\log (1/\beta)}}{\epsilon} \right)$.
\end{proof}

\subsubsection{Local Differential Privacy}

The local differential privacy of LDP-FPMiner is given in Theorem~\ref{the:lpd-fpminer} and Theorem~\ref{the:ldp-construct}.

\begin{theorem}
The LDP-FPMiner (i.e., Algorithm \ref{alg fpmine}) satisfies $\epsilon$-LDP.\label{the:lpd-fpminer}
\end{theorem}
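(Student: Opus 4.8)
The plan is to decompose Algorithm~\ref{alg fpmine} into its three privacy-consuming stages and argue that each one accesses raw user data exactly once, on a disjoint group of users, while consuming at most $\epsilon$ of privacy budget; then invoke the parallel/disjoint-data argument together with the post-processing lemma (Lemma~\ref{post processing}) to conclude. First I would note that line~\ref{fpmine group} partitions $\mathcal{T}$ into the three disjoint groups $G_1$, $G_2$, $G_3$, and that each subsequent stage queries only one of these groups: the SVIM call on $G_1$ (line~\ref{fpmine items}), the OLH perturbation of the held-item count on $G_2$ (line~\ref{fpmine length first}), and the $\operatorname{ConstructNoisyTree}$ call on $G_3$ (line~\ref{fpmine mine}). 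Because the groups are disjoint, a transaction $t_i$ is touched by exactly one of these three mechanisms, so the indistinguishability guarantee for $t_i$ is governed solely by that single mechanism rather than by any composition across groups.

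Next I would verify each stage individually satisfies $\epsilon$-LDP with respect to the users it touches. For $G_1$: SVIM is itself built from the OLH/PSFO frequency oracles, each step of which perturbs a user's sampled item with budget $\epsilon$ via the GRR-on-$[g]$ mechanism; since SVIM further splits $G_1$ into its own disjoint subgroups (prune/size/frequency), each user in $G_1$ is perturbed once with budget $\epsilon$, so SVIM is $\epsilon$-LDP — this is the result established in \cite{a2} and I would cite it rather than reprove it. For $G_2$: line~\ref{fpmine length first} is a single application of $\operatorname{OLH}(\epsilon)$ to each user's integer input, which is $\epsilon$-LDP by definition of OLH; the percentile computation in line~\ref{fpmine Lm} is pure post-processing of the aggregated perturbed counts, hence harmless by Lemma~\ref{post processing}. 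For $G_3$: I would appeal forward to Theorem~\ref{the:ldp-construct} (stated next in the excerpt), which asserts $\operatorname{ConstructNoisyTree}$ is $\epsilon$-LDP; the subsequent mining in line~\ref{fpmine mine} operates only on the noisy tree $\hat{\mathcal{N}}$ without re-touching raw data, so Lemma~\ref{post processing} again keeps it at $\epsilon$-LDP.

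Finally I would assemble the three pieces: fix an arbitrary user $i$ with transaction $t_i$; $i$ belongs to exactly one group $G_b$, and the full output of Algorithm~\ref{alg fpmine} is a (possibly randomized) function of the messages sent by all users, where $i$'s only contribution is a single $\epsilon$-LDP report inside stage $b$. For any pair $t_i, t_i'$ and any output event $\mathcal{O}$, conditioning on all other users' reports and integrating, the ratio $\mathbf{Pr}[\mathcal{O}\mid t_i]/\mathbf{Pr}[\mathcal{O}\mid t_i'] \le e^{\epsilon}$ follows from the $\epsilon$-LDP guarantee of stage $b$ combined with post-processing (Lemma~\ref{post processing}) applied to everything computed downstream. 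Since this holds for every user regardless of which group they fall in, Algorithm~\ref{alg fpmine} is $\epsilon$-LDP.

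The main obstacle I anticipate is making the ``disjoint groups do not compose'' step rigorous: one must be careful that the group assignment in line~\ref{fpmine group} is data-independent (a public random partition), so that revealing which group a user is in leaks nothing, and that the analyst's adaptive choices between stages (e.g.\ using $S'$ from $G_1$ to define the domain for $G_2$ and $G_3$) are themselves post-processing of already-private outputs and therefore cannot amplify the privacy loss of any individual user. Everything else reduces to citing $\epsilon$-LDP of the OLH/SVIM primitives, forward-referencing Theorem~\ref{the:ldp-construct}, and applying Lemma~\ref{post processing}.
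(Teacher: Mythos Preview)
Your proposal is correct and follows essentially the same approach as the paper's own proof: decompose into the three disjoint groups $G_1,G_2,G_3$, argue each stage is $\epsilon$-LDP (citing SVIM/OLH for the first two and forward-referencing Theorem~\ref{the:ldp-construct} for the third), and invoke post-processing for the final mining step. You are simply more explicit than the paper about the disjoint-group/parallel-composition reasoning and the adaptivity-as-post-processing point, but the structure and key ingredients are the same.
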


\begin{proof}
LDP-FPMiner divides all users into three mutually disjoint groups, that is, group $G_1$ for finding frequent items $S^{\prime}$, $G_2$ for computing maximum length $M$ and $G_3$ for constructing a noisy FP-tree $\hat{\mathcal{N}}$. LDP-FPMiner provides $\epsilon$-LDP protection for the first two groups of users obviously, and for the third group due to Theorem~\ref{the:ldp-construct}. The mining process over the noisy FP-tree does not consume any privacy budget due to the post-processing property. Thus the whole process of LDP-FPMiner satisfies $\epsilon$-LDP.
\end{proof}

\begin{theorem}
Constructing a noisy FP-tree (i.e., Algorithm \ref{alg ConstructNoisyTree}) satisfies $\epsilon$-LDP.\label{the:ldp-construct}
\end{theorem}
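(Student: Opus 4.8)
The plan is to argue that, from the standpoint of any single user, the whole of Algorithm~\ref{alg ConstructNoisyTree} collapses to one invocation of the OLH frequency oracle run with the full budget $\epsilon$, wrapped in post-processing. First I would use the partition of line~\ref{ConstructNoisyTree group}: the groups $g_1,\dots,g_M$ are pairwise disjoint, so every user in $G_3$ contributes her raw (pruned, sorted, truncated-to-$l$-items) transaction exactly once, at the unique level $l$ whose group contains her, via the perturbation of line~\ref{ConstructNoisyTree perturbe}. That step is literally $\operatorname{Perturb}_{\textit{OLH}}$ over the finite domain $C_l\cup\{\dagger\}$, and OLH is an $\epsilon$-LDP mechanism (Section~\ref{fo protocol}); hence the message this user emits already satisfies the likelihood-ratio bound in the $\epsilon$-LDP definition, whatever her true transaction may be. Because the groups are disjoint, no user is ever perturbed twice, so no cross-level composition is needed at all.

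Next I would show that everything the analyst does besides collecting those single messages is post-processing in the sense of Lemma~\ref{post processing}. The preprocessing of line~\ref{ConstructNoisyTree preprocess} uses only the publicly broadcast set $S^{\prime}$; the candidate-generation loop (lines~\ref{ConstructNoisyTree prefix set}--\ref{ConstructNoisyTree prefix set constructed}), the call $\operatorname{CutdownCandidate}$ of line~\ref{ConstructNoisyTree cutdown}, the unbiased aggregation of line~\ref{ConstructNoisyTree collect}, and the negative-to-zero truncation of line~\ref{ConstructNoisyTree update} all act solely on $S^{\prime}$, $M$, $\xi$ and on estimated counts already derived from perturbed messages — never on any user's raw transaction. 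The one place that needs care is the adaptivity of the domain: $C_l$ at level $l$ is constructed from the noisy counts obtained at levels $1,\dots,l-1$. I would emphasize that this adaptivity is a deterministic (plus fresh independent randomness, e.g. hash choices) function of previously released, already-sanitized outputs; by Lemma~\ref{post processing} it incurs no additional privacy loss for the earlier groups, and a user in group $g_l$ still enjoys the full $\epsilon$ guarantee even though the domain of her OLH instance was fixed after seeing the earlier levels.

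Finally I would assemble the argument. Fix an arbitrary user $u$ and, conditioning on the inputs of all other users, write the map from $u$'s raw transaction to the full transcript of Algorithm~\ref{alg ConstructNoisyTree} (the returned $\hat{\mathcal{N}}$ together with every intermediate message) as $\phi \circ \operatorname{Perturb}_{\textit{OLH}}$, where $\operatorname{Perturb}_{\textit{OLH}}$ is $u$'s single $\epsilon$-LDP perturbation and $\phi$ gathers all deterministic computation and all other users' independent randomness. Lemma~\ref{post processing} then gives that this map is $\epsilon$-LDP with respect to $u$; since $u$ was arbitrary and, by disjointness of the groups, no user is touched more than once, Algorithm~\ref{alg ConstructNoisyTree} satisfies $\epsilon$-LDP. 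I expect the only genuine obstacle to be making the adaptivity point airtight: one must state explicitly that choosing the level-$l$ query domain from the sanitized statistics of earlier levels is legitimate post-processing rather than an extra query against the raw data, which is precisely what lets every group spend the full budget $\epsilon$ and keeps the analysis from degenerating into a sequential-composition bound of $M\epsilon$.
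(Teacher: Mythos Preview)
Your proposal is correct and follows essentially the same approach as the paper: the core argument is that the $M$ groups are disjoint, so each user in $G_3$ invokes OLH exactly once with the full budget $\epsilon$, and hence is $\epsilon$-LDP protected. Your treatment is in fact more careful than the paper's own proof, which is a brief paragraph that does not explicitly address the post-processing status of candidate generation or the adaptivity of $C_l$; you make these points rigorous, but the underlying strategy is the same.
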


\begin{proof}
Let $V_l$ denote the set of nodes at the $l$th level in a noisy FP-tree $\hat{\mathcal{N}}$. Then there are at most $M$ iterations to privately construct $\hat{\mathcal{N}}$ (excluding the root node). For each iteration, algorithm \ref{alg ConstructNoisyTree} allocates $n_g$ users to gather information of $V_l$. Due to the fact that each user applies OLH to perturb her input only once on the iteration participated with full privacy budget $\epsilon$, thus every user is protected by $\epsilon$-LDP and the overall process of constructing a noisy FP-tree satisfies $\epsilon$-LDP.
\end{proof}

\section{Optimizations}
\label{optimizations}
We optimize LDP-FPMiner using frequency consistency between estimated frequencies and guessing frequencies, and also using frequency consistency in term of the FP-tree structure. These optimizations satisfy the post-processing property of local differential privacy (Lemma \ref{post processing}), and thus the local differential privacy is cerntainly preserved.

\subsection{Guessing Probability}

We use the estimated frequencies of frequent items to compute guessing probabilities for all candidate prefixes in a noisy FP-tree. As previous, let $x_1 \succ x_2 \succ \cdots  \succ x_k$ be $k$ frequent items in frequency descending order. For a candidate prefix $p = x_{i_1}x_{i_2} \cdots x_{i_a}$, with $1 \le i_1 < i_2 < \cdots < i_a \le k$, we can compute its guessing probability by Eq.~\eqref{eq:guess_probability}.

\begin{equation}\label{eq:guess_probability}
\bar{f}(p) = \tilde{f}(x_{i_a}) \prod_{u=1}^{u=a-1} \Big [ \tilde{f}(x_{i_u}) \cdot \prod_{v=i_u+1}^{v=i_{u+1}-1}(1-\tilde{f}(x_{v}))\Big ]
\end{equation}

The guessing probability of a prefix can be regarded as the estimated occurrence probability of the prefix. For example, in Figure~\ref{fig:guessprob}, we can compute $\bar{f}(x_1) = \tilde{f}(x_1)$, $\bar{f}(x_2) = (1-\tilde{f}(x_1))\tilde{f}(x_2)$, $\bar{f}(x_3) = (1-\tilde{f}(x_1))(1-\tilde{f}(x_2))\tilde{f}(x_3)$, $\bar{f}(x_1 x_2)$ $ = \tilde{f}(x_1) \tilde{f}(x_2)$, and $\bar{f}(x_1 x_3) = \tilde{f}(x_1)(1- \tilde{f}(x_2))\tilde{f}(x_3)$. The guessing probabilities can be used to generate guessing frequencies, and then used to optimize noisy FP-trees by exploiting frequency consistency that both queried and guessing frequencies should be consistent.

\begin{figure}[tb]
\centerline{\includegraphics[width=0.40\textwidth]{./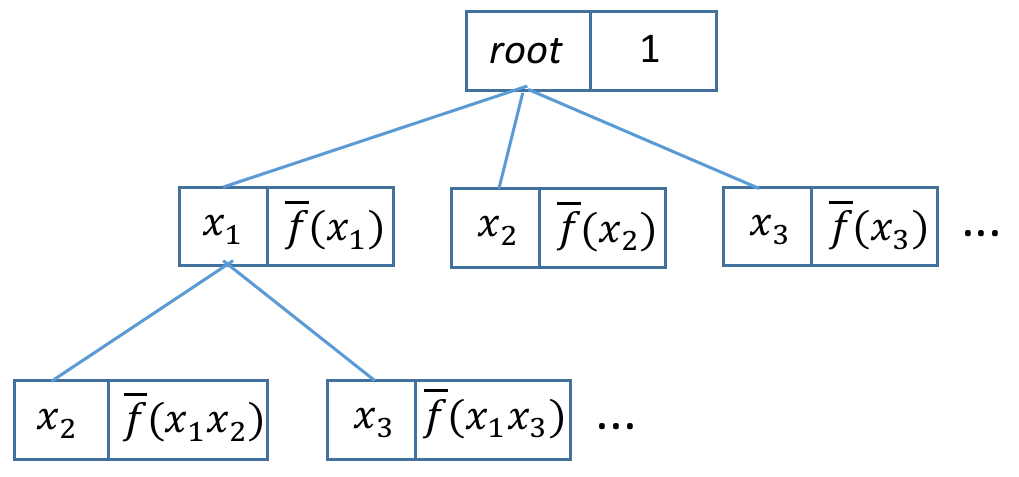}}
\caption{An Example of Guessing Probability.}
\label{fig:guessprob}
\end{figure}

\subsection{Two-folded Weighted Combination}

Since the space of itemsets is large, most of the itemsets have low frequencies. When these frequencies are queried with LDP and their statistics are made with partial data, they may deviate obviously from their original values, leading to biased evaluations. The noisy FP-tree based algorithms suffer from this issue due to the candidate cutdown. To address this, we use the guessing frequencies derived from the guessing probabilities to alleviate the bias, and our solution is two-folded.

First, during the noisy FP-tree construction, we rescale the guessing probabilites generated by Eq.~\eqref{eq:guess_probability} with their maximum to get guessing frequencies, and  for each level, we combine these guessing frequencies with the queried frequencies which are also rescaled with their maximum. Let $\mathbb{P}$ denote the set of all prefixes, then for any prefix $p \in \mathbb{P}$, and its guessing frequency $\mathcal{G}(p)$, we have the weighted frequency $\Omega(p)=\omega^{\prime} \cdot \tilde{f}(p) + (1-\omega^{\prime})\cdot \mathcal{G}(p)$, where $\omega^{\prime} \in [0,1]$ is the prefix weighted parameter. This operation, which we call prefix weighted combination (PWC), alleviates the bias of the queried frequencies effectively due to the experimental results given in Section~\ref{experiment}.

Second, in the final itemset mining, the guessing frequency of each itemset is computed by Eq.~\eqref{eq:guessfreq} and combined with the mined frequencies. Note that the computation of guessing frequencies of itemsets are quite different from those of prefixes, since an itemset usually can be found in multiple prefixes. Let $\tilde{P}$ denote the set of itemsets, which consists of the discovered itemsets. Then for any itemset $X \in \tilde{P}$, with its guessing frequency $\mathcal{G}(X)$, we have the weighted frequency $\Omega(X)=\omega \cdot \tilde{f}(X) + (1-\omega)\cdot \mathcal{G}(X)$, where $\omega \in [0,1]$ is the itemset weighted parameter. As can be seen from the experiments, this operation named itemset weighted combination (IWC) greatly improves the performance of our scheme.


\subsection{Conditional Constrained Inference}

 We optimize a noisy FP-tree by the tree structure constraint that \emph{the count of a node is no less than the sum of counts of all its children}. Noisy counts may break this constraint. The challenge is that when equation holds is unknown, and even worse, we cut down nodes during the noisy FP-tree construction and finally only get a partial tree. To overcome this, we regard each node and all its children as a 2-layer tree, and optimize each 2-layer noisy tree proportionally to the guessing probabilities of the remaining children nodes (some children nodes may be cut down). Specifically, given a node $v$ in $\hat{\mathcal{N}}$, denoting $\tilde{f}(.)$ as the noisy count function, $\operatorname{succ}(v)$ as the set of all children of node $v$, and $b$ as the number of children, we correct the noisy counts by constrained inference \cite{hay2010boosting,qardaji2013understanding} with apropriate ratio adjustments by Eqs.~\eqref{eq:fstar_v} and \eqref{eq:fstar_vc}.
\begin{equation}\label{eq:fstar_v}
\tilde{f^*}(v) = \frac{b^2 - b}{b^2 - 1} \tilde{f}(v) + \frac{b-1}{b^2-1} \cdot \frac{1}{\theta} \cdot \sum_{v_c \in \operatorname{succ}(v)} \tilde{f}(v_c)
\end{equation}
\begin{equation}\label{eq:fstar_vc}
\tilde{f^*}(v_c) = \tilde{f}(v_c) + \frac{1}{b}(\tilde{f^*}(v) \cdot \theta - \sum_{v_c \in \operatorname{succ}(v)} \tilde{f}(v_c))
\end{equation}

Here, $\theta$ is the ratio of the sum of guessing probabilities of the remaining children nodes to the gessing probability of the parent node. $\theta$ is computed by Eq.~\eqref{eq:alpha}.
\begin{equation}\label{eq:alpha}
\theta = \frac{1}{\bar{f}(p(v))} \sum_{v_c \in \operatorname{succ}(v)} \bar{f}(p(v_c))
\end{equation}
where $p(v)$, $p(v_c)$ represent the prefixes of the parent node $v$ and the children nodes $v_c$. Note that, when 
\[\tilde{f}(v) < \sum_{v_c \in \operatorname{succ}(v)} \tilde{f}(v_c),\]
we simply set $\theta = 1$ due to the probable large amount of noise in children nodes. Since when $\theta$ is small, the noise may impact the correcton results greatly, in the experiments, we set a threshold value $\theta_0$ (e.g,  $\theta_0 = 0.3$) and do constrained inferences conditionally only when $\theta \ge \theta_0$. This process can be repeated several times to get a better result. The times of repetion can be determined experimentally, and in our experiments, we find 5-10 times of repetion seem sufficient.



\subsection{Negative-positive Balance}

When querying prefix frequencies from users in LDP setting during the noisy FP-tree construction, the estimated frequencies may be positive or negative. We propose a post-processing method called negative-positive balance (NPB) to reduce the noise added effectively. 

The NPB method works as follows. First, the analyst calculates the sum of absolute values of all negative frequencies, and set all negative frequencies to 0; Then, the analyst randomly picks a positive frequency and substracts 1 from it, and this repeats until the value substracted from positive frequencies is equal to the absolute sum of negative frequencies. 

The intuition of this method is that frequencies are originally non-negative, and if an estimated frequency is negative, it is certainly underestimated. Thus, setting negative frequencies to 0 is on the right way to reduce the noise. However, since we add some value to the overall frequencies, and we need to substract the same value from them to maintain unbiasedness. Moreover, substracting values from positive frequencies makes about half of them reduce the noise, since roughly half of the positive frequencies should be overestimated in probability. Theorefore, most of frequencies are on their right way to reduce the noise, and the NP balance would reduce the noise overall.

\section{Experiments}\label{experiment}
In this section, we experimentally evaluate the performance of LDP-FPMiner, and compare it with the state-of-the-art protocol SVSM. All experiments are performed on an Intel Core i5-7500 3.4GHz CPU with 16GB RAM.

\subsection{Settings}
\label{setting}
We implement both LDP-FPMiner and SVSM in Python 3.8. For both schemes, we partition all users into three groups in the same way. Specifically, $50\%$ of users report in the first step to identify $k$ frequent items as well as their frequencies, $10\%$ of users report size, and $40\%$ of users participate in constructing the noisy tree for LDP-FPMiner, and evaluating frequencies of candidate itemsets for SVSM. 

\begin{table}[tb]
\caption{Dataset description. The numbers of transactions $|\mathcal{T}|$ and the dimensions of  transactions $|\mathcal{X}|$ of the three datasets are listed. The weighted parameters $\omega^{\prime}$ and $\omega$ values for the datasets used in the experiments are given.}\label{dataset}
\begin{center}

\begin{tabu}{ccccc}\tabucline[1pt]{-}
  dataset& $|\mathcal{T}|$ & $|\mathcal{X}|$ & $\omega^{\prime}$ & $\omega$\\\hline
  Synthetic & $969,223$ & $4,411$ & 0.9 & 0.7 \\
  Kosarak & $990,002$ & $41,270$ & 0.9 & 0.7\\
  BMS-POS & $515,597$ & $1,658$ & 0.7 & 0.5 \\

\tabucline[1pt]{-}
\end{tabu}

\end{center}
\end{table}

\textbf{Synthetic datasets.} We generate one synthetic dataset by the IBM Synthetic Data Generation Code. Specifically, there are one million transactions were generated with 5000 categories.

\textbf{Real datasets.} We use two real-world datasets for frequent itemset mining\cite{spmf}, Kosarak and BMS-POS, which have been used in \cite{a2}.

\textbf{Metrics.} To measure the performance, we use two universal metrics in the literature, the Normalized Cumulative Rank (NCR) and the Squared Error (Var) \cite{a2}.

\begin{itemize}
\item \textbf{NCR.} It evaluates the score of itemsets identified as well as their rank. Specifically, let $P=\{X_1,X_2,...,X_k\}$ and $\tilde{P}=\{\tilde{X}_1,\tilde{X}_2,...,\tilde{X}_k\}$ denote the real and estimated top-$k$ itemsets ranked in descending order with respect to their frequencies, respectively. A quality function $q(\cdot)$ for a given itemset is defined as its score of rank, i.e., $q(\tilde{X}_i)=(k-i+1)$. All other itemsets that not in $P$ have a score of 0. Then the NCR is defined as follows:
\begin{equation}
\operatorname{NCR} = \frac{\sum_{X\in \tilde{P}}q(X) }{\sum_{X\in P}q(X)} .
\end{equation}
where the denominator is constant and equal to $\frac{k(k+1)}{2}$.
\item \textbf{Var.} It measures the estimation accuracy. For itemset $X \in P \cap \tilde{P}$, let $f(x)$ and $\tilde{f}(x)$ be the real and estimated frequencies, respectively, we have
\begin{equation}
\operatorname{Var} = \frac{1}{|P \cap \tilde{P} |} \sum_{x \in P \cap \tilde{P}} (f(x) - \tilde{f}(x))^2 .
\end{equation}
\end{itemize}

Note that in our experiments, all experimental results are repeated 20 times in order to eliminate the randomness caused by the LDP setting. We set $\xi=3$ for function $\operatorname{CutdownCandidate}(.)$ for all different datasets. Both prefix and itemset weighted parameters $\omega^{\prime}$ and $\omega$ are set in term of datasets as shown in Table~\ref{dataset}. The ratio threshold $\theta_0$ is set to $0.3$ all the time.

\subsection{Overall Results}
\label{result}
Now we compare the performance between LDP-FPMiner and SVSM. Specifically, we evaluate the NCR and Var metrics of discovering length-$\alpha$ itemsets ($\alpha \geq 2$) over three datasets when only $\epsilon$ varies and when only $k$ varies (the length-$1$ itemsets, that is, the $k$ frequent items, have been collected privately through the same protocol). Besides, for LDP-FPMiner, we present the performance of a key optimization, i.e., the itemset weighted combination.

\subsubsection{\textbf{The impact of $\epsilon$}}
The results on three datasets when only $\epsilon$ varies are presented in Fig. \ref{fig:vary ep} (the NCR metric on the first line, and the Var metric on the second line). In almost all settings, the LDP-FPMiner has the significantly higher NCR values than SVSM. This means that LDP-FPMiner is more accurate than SVSM. The advantage is more obvious when the $\epsilon$ values are small. As $\epsilon$ increases, the advantage gradually decreases. Similarly, the LDP-FPMiner has much smaller Var values than SVSM when the $\epsilon$ is small, which means that the noise added by LDP-FPMinder is effectively reduced, and the advantage decreases as the $\epsilon$ becomes large. Surprisingly, even the curves of LDP-FPMiner when $k=100$ perform better than those of SVSM when $k=50$. For different datasets, it appears both schemes work best for the Synthetic dataset. Despite of this, LDP-FPMiner has still an obvious advantage over SVSM over this dataset. In short, LDP-FPMiner introduces much less noise, and thus outperforms SVSM significantly when $\epsilon$ is small.


\begin{figure*}[tb]
  \centering
  \subfigure[NCR, Synthetic]{\includegraphics[width=0.3\textwidth]{./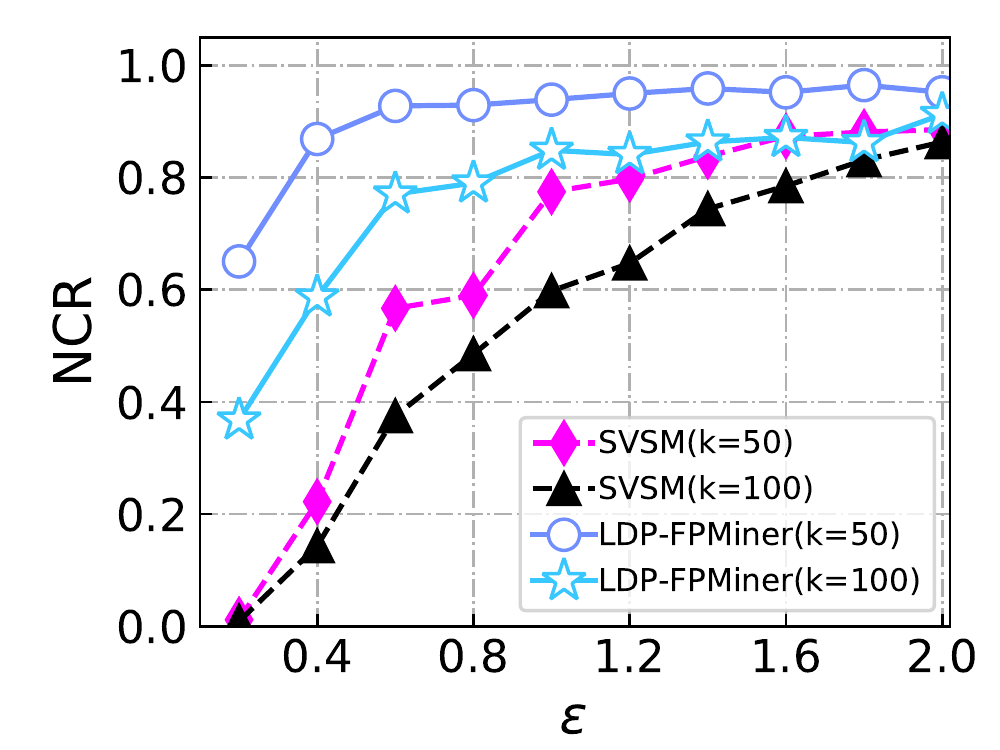}\label{ncr vary ep syn}}
  \subfigure[NCR, Kosarak]{\includegraphics[width=0.3\textwidth]{./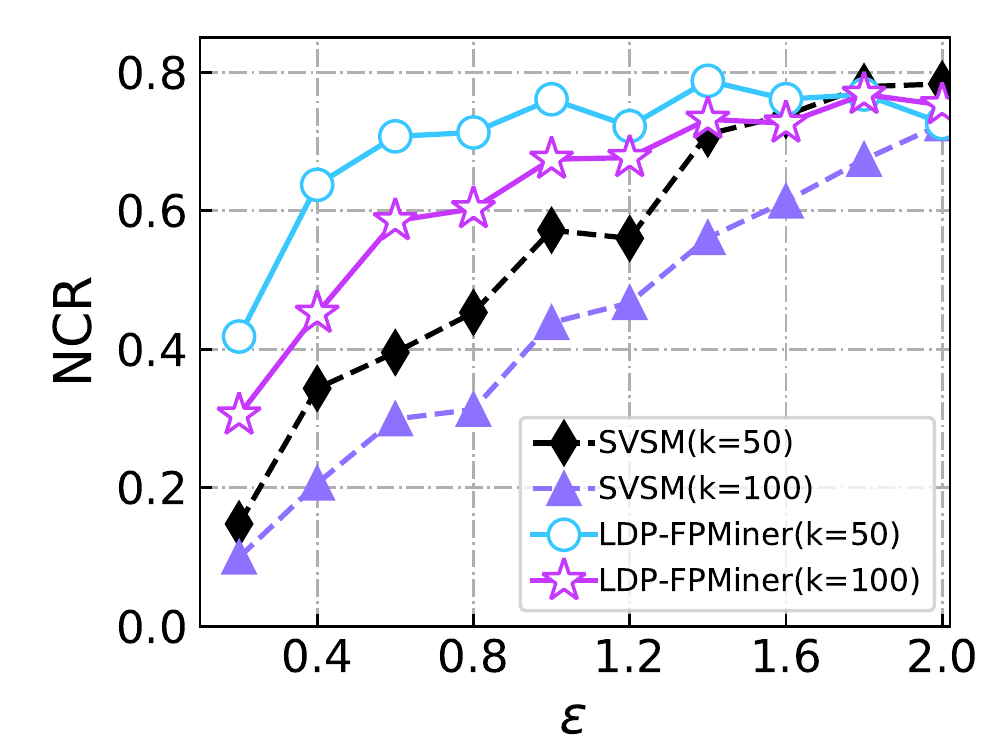}\label{ncr vary ep kos}}
  \subfigure[NCR, BMS-POS]{\includegraphics[width=0.3\textwidth]{./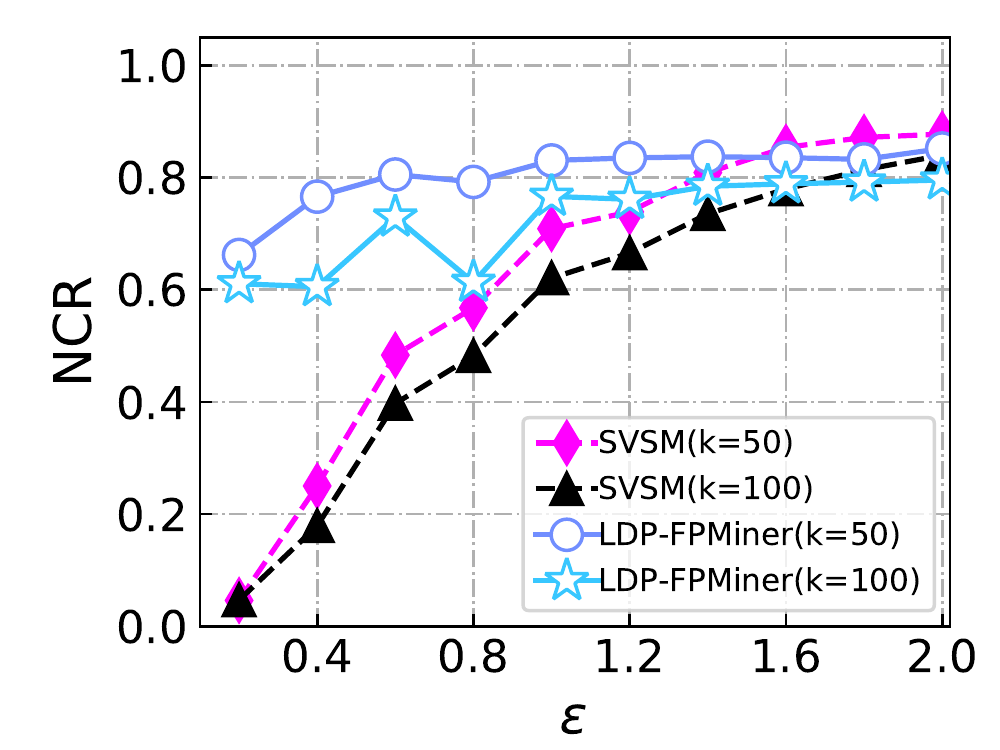}\label{ncr vary ep pos}}

  \subfigure[Var, Synthetic]{\includegraphics[width=0.3\textwidth]{./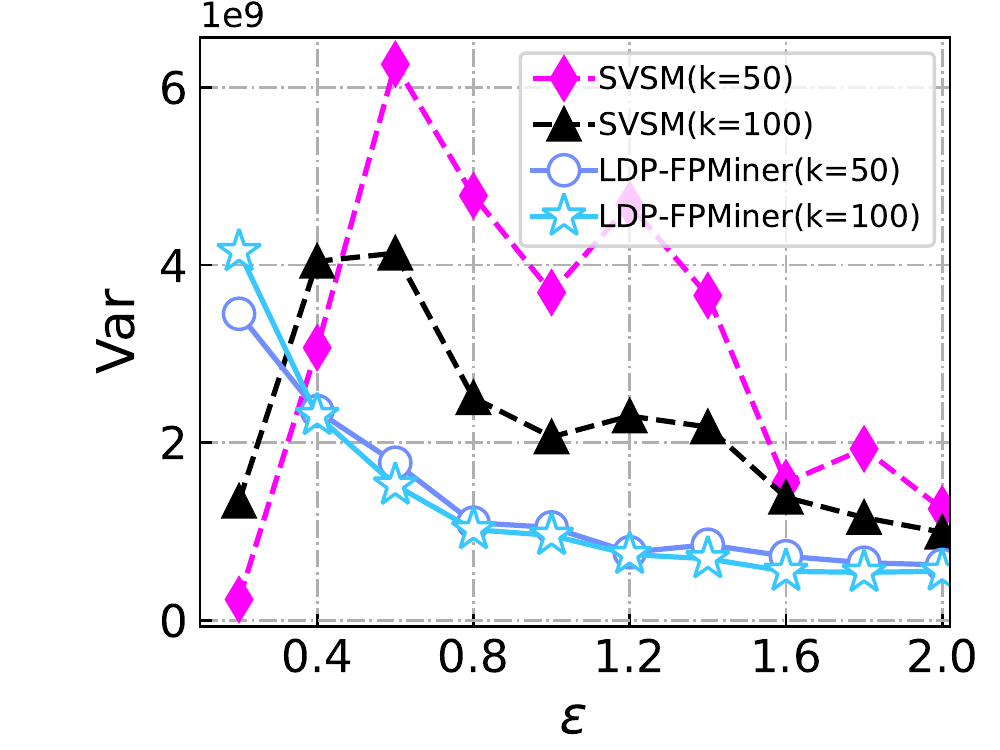}\label{re vary ep syn}}
  \subfigure[Var, Kosarak]{\includegraphics[width=0.3\textwidth]{./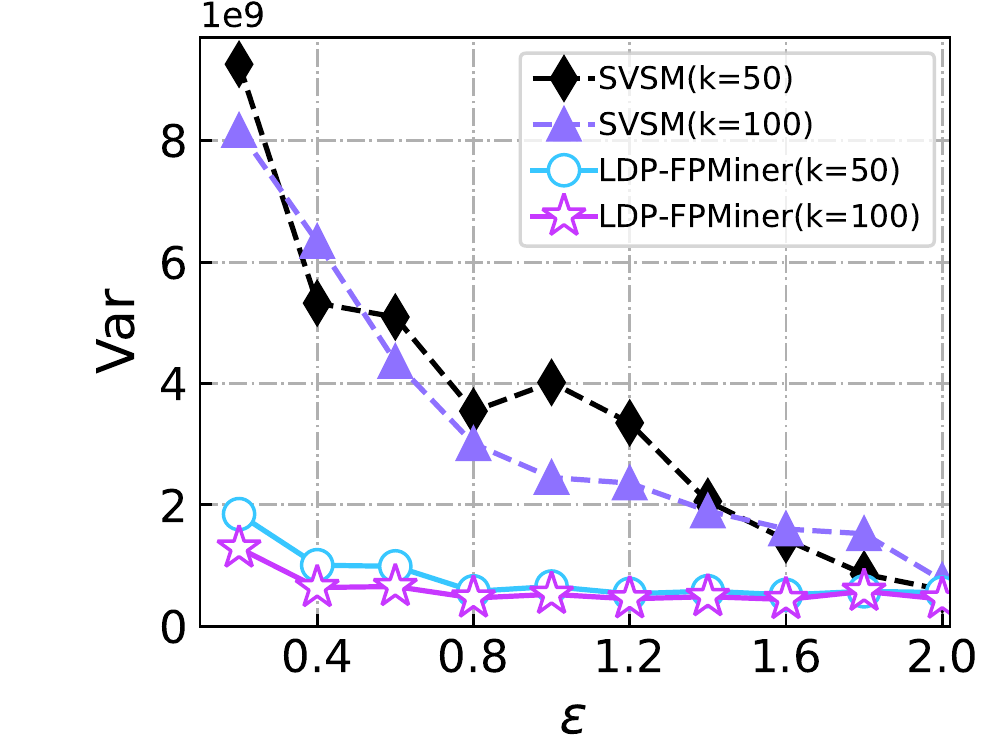}\label{re vary ep kos}}
  \subfigure[Var, BMS-POS]{\includegraphics[width=0.3\textwidth]{./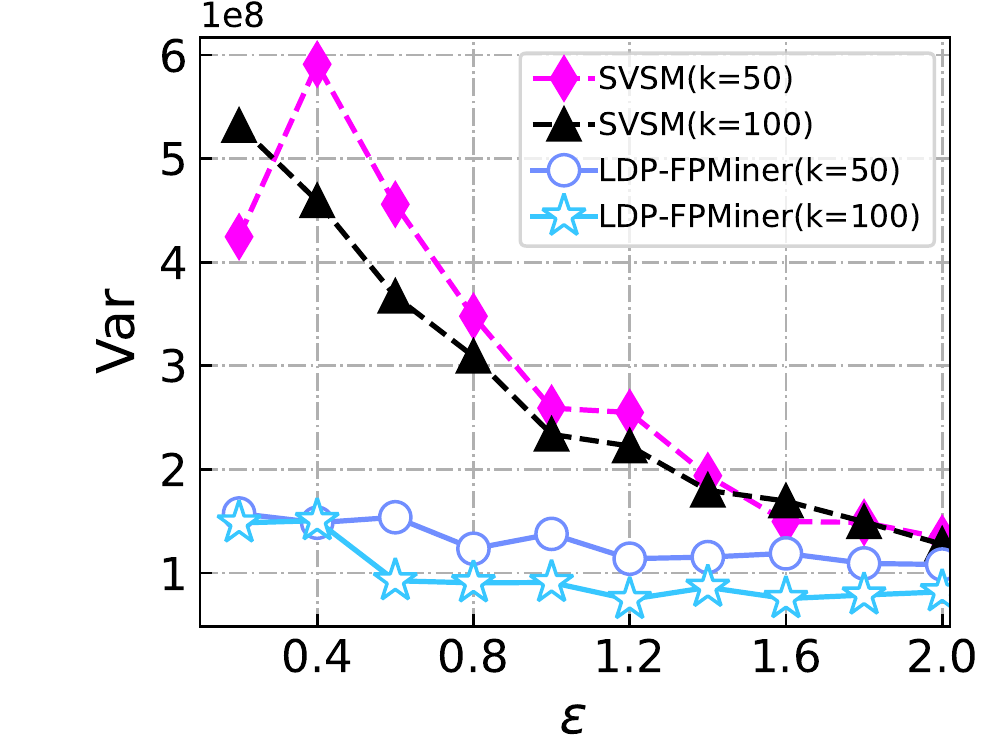}\label{re vary ep pos}}
\caption{Performances when varying $\epsilon$ and fixing $k = 50$ and $100$.}
\label{fig:vary ep}
\end{figure*}

\begin{figure*}[tb]
  \centering
  \subfigure[NCR, Synthetic]{\includegraphics[width=0.3\textwidth]{./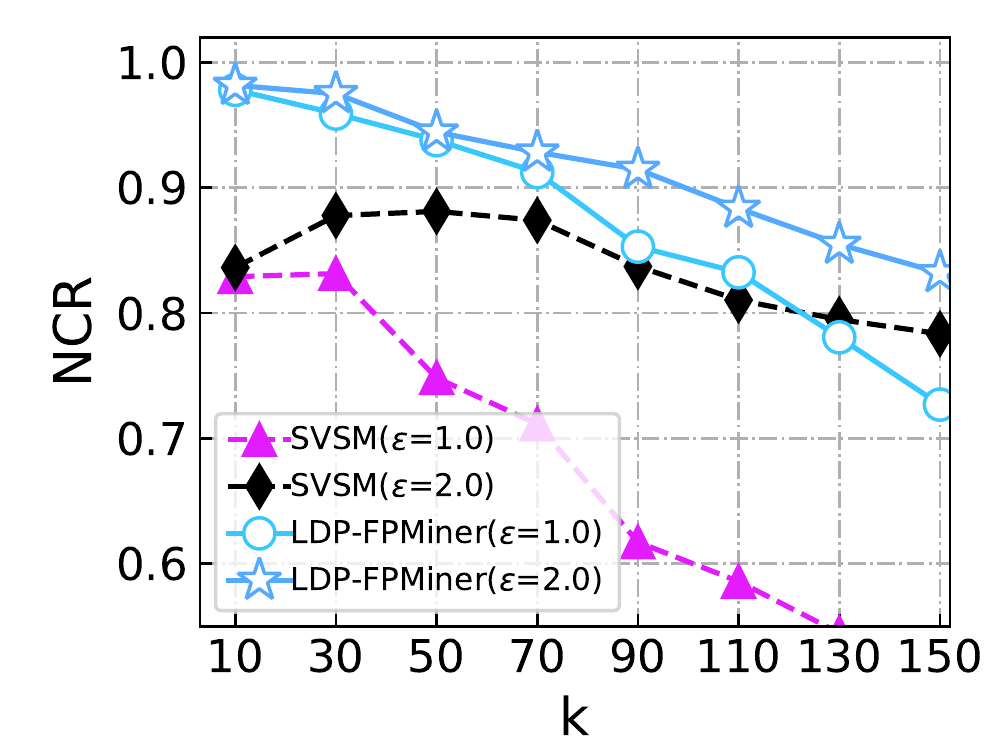}\label{ncr vary k syn}}
  \subfigure[NCR, Kosarak]{\includegraphics[width=0.3\textwidth]{./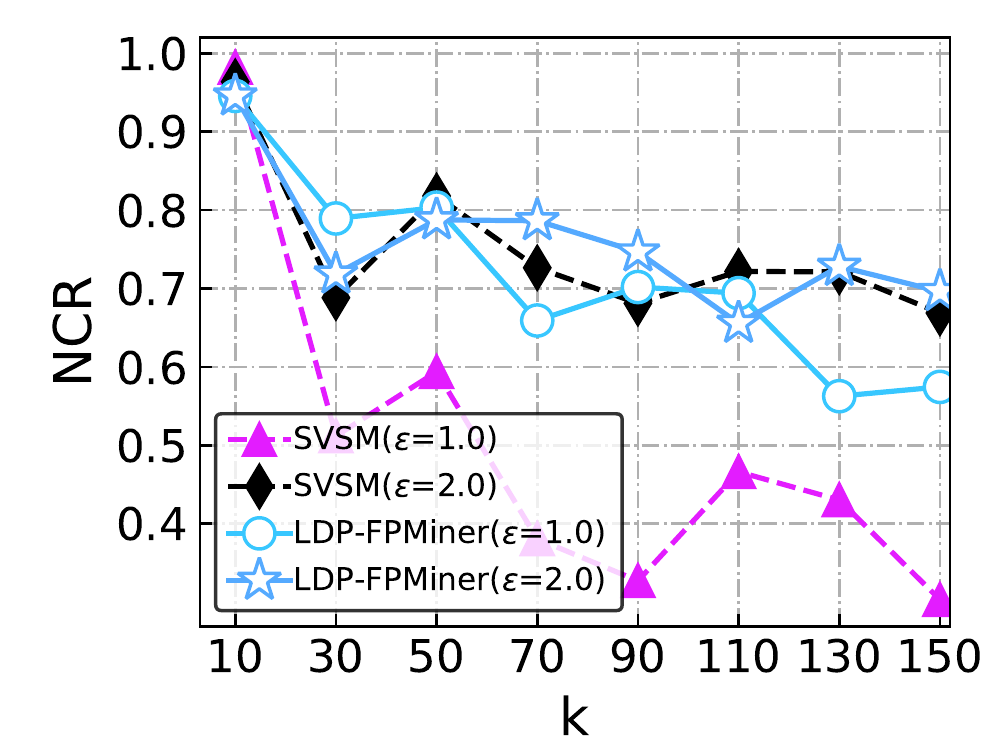}\label{ncr vary k kos}}
  \subfigure[NCR, BMS-POS]{\includegraphics[width=0.3\textwidth]{./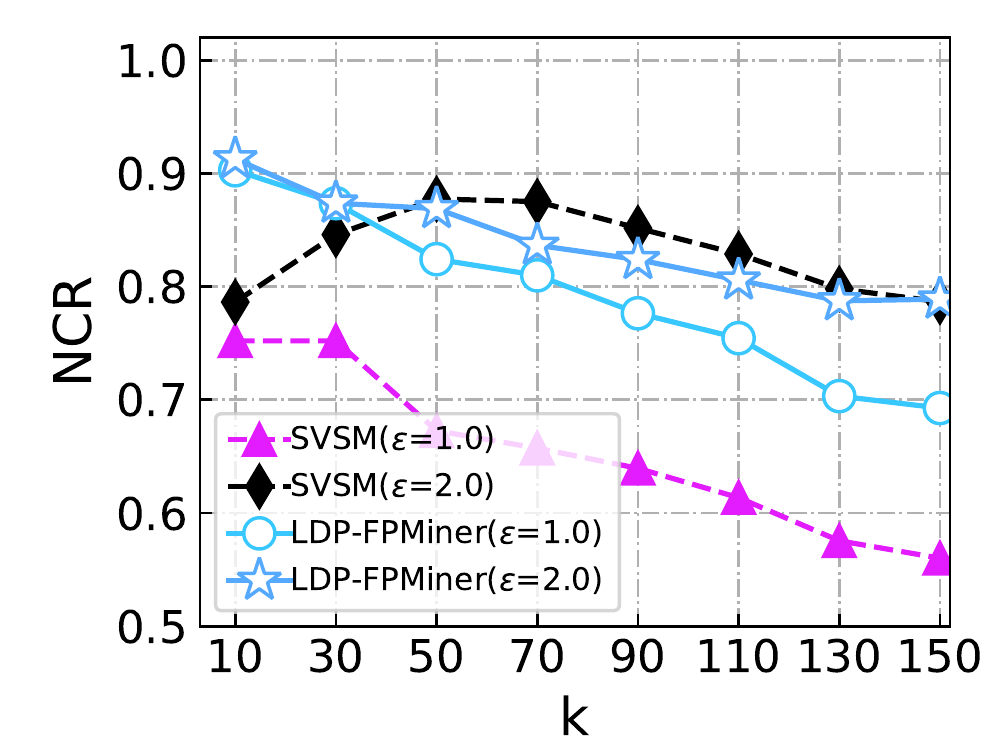} \label{ncr vary k pos}}

  \subfigure[Var, Synthetic]{\includegraphics[width=0.3\textwidth]{./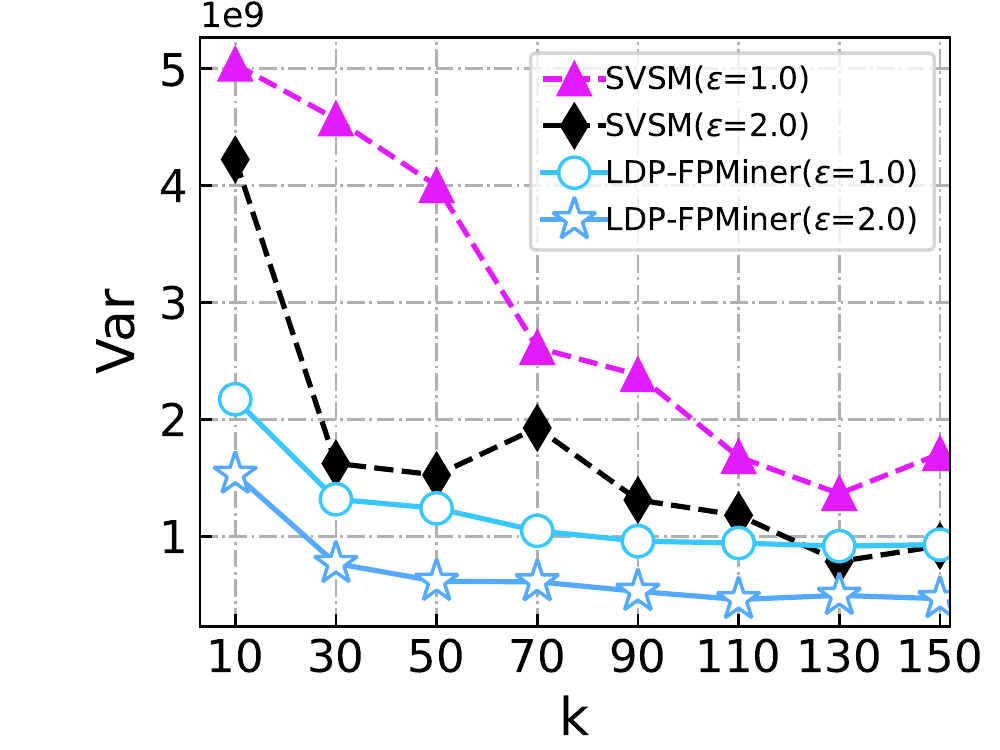}\label{re vary k syn}}
  \subfigure[Var, Kosarak]{\includegraphics[width=0.3\textwidth]{./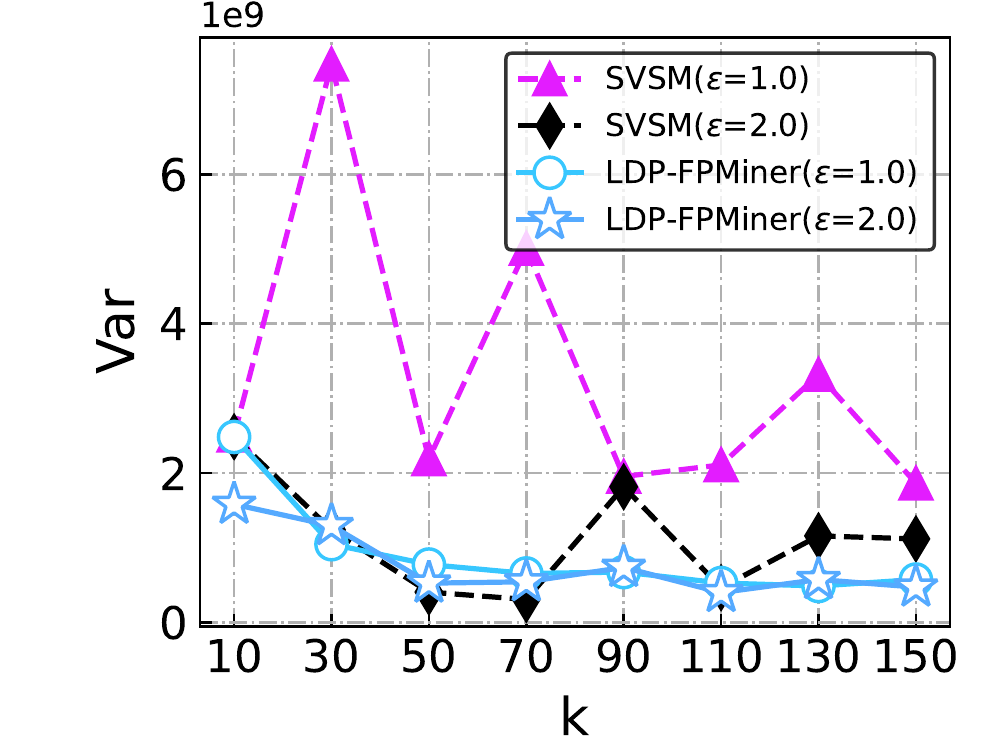}\label{re vary k kos}}
  \subfigure[Var, BMS-POS]{\includegraphics[width=0.3\textwidth]{./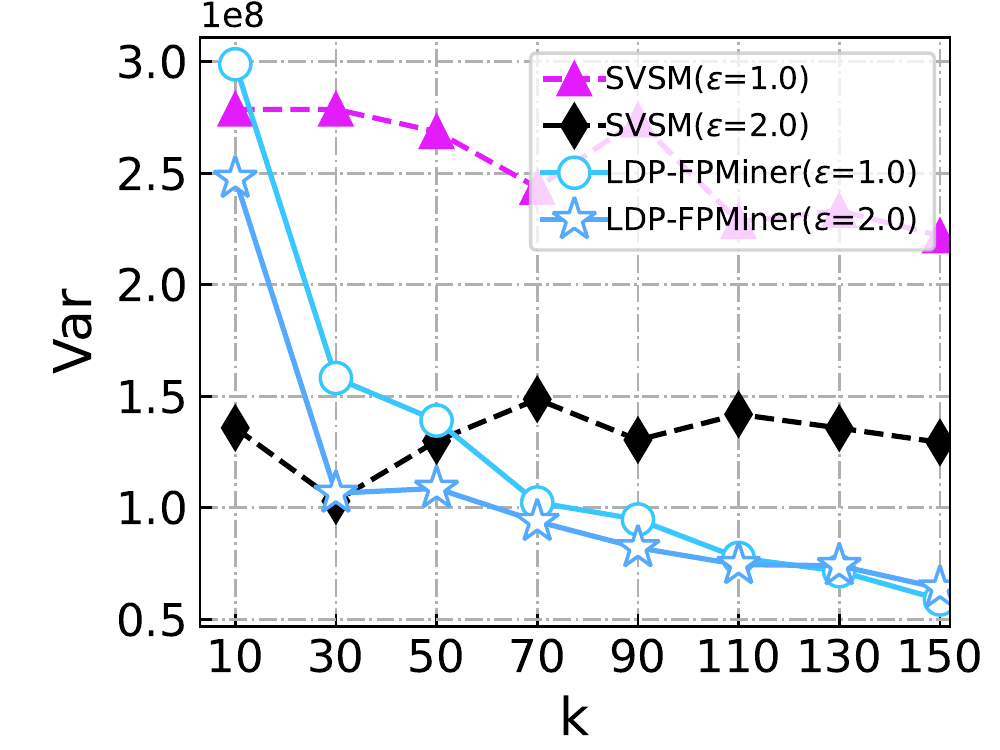} \label{re vary k pos}}
\caption{Performances when varying $k$ and fixing $\epsilon = 1$ and $2$.}
\label{fig:vary k}
\end{figure*}

\begin{figure*}[tb]
  \centering
  \subfigure[Synthetic]{\includegraphics[width=0.3\textwidth]{./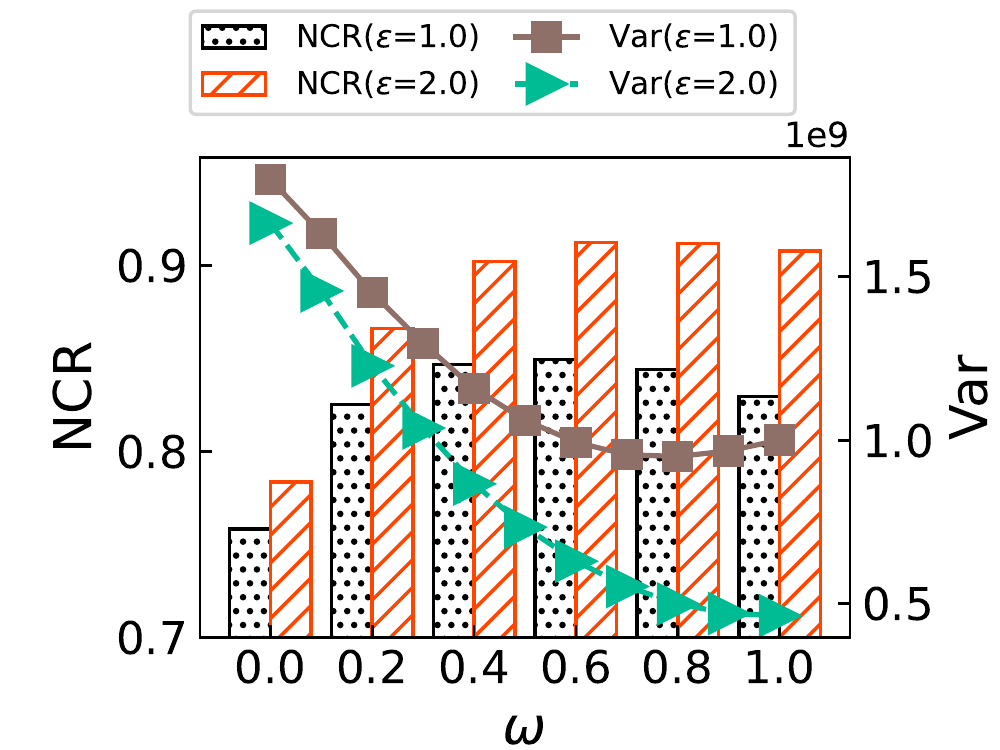}\label{ncr weight syn}}
  \subfigure[Kosarak]{\includegraphics[width=0.3\textwidth]{./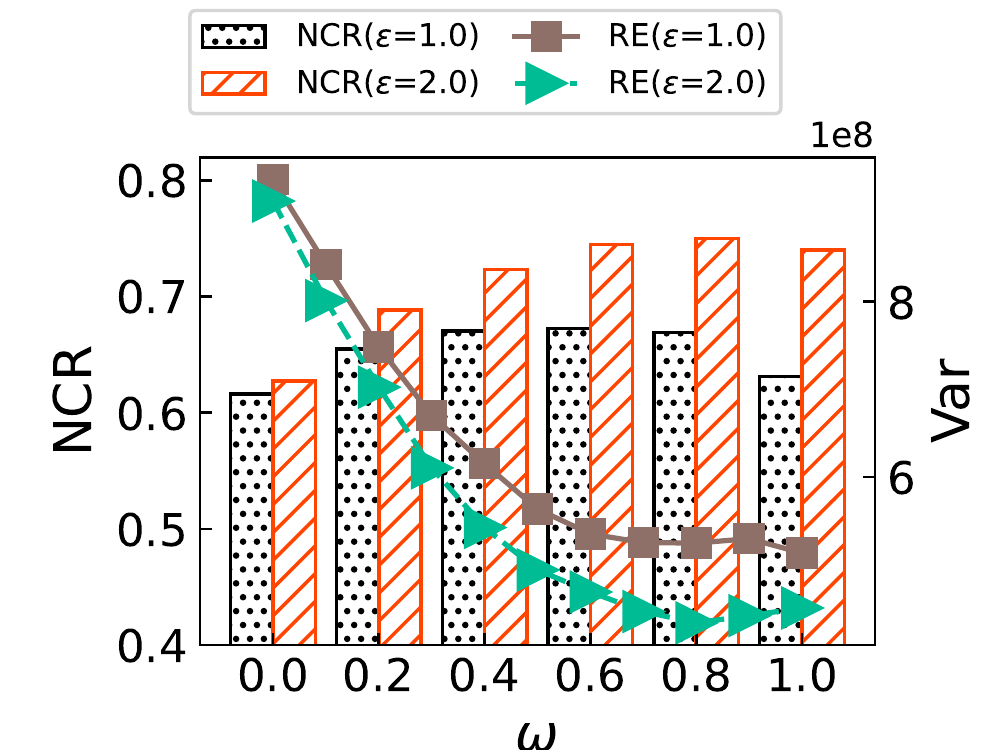}\label{ncr weight kos}}
  \subfigure[BMS-POS]{\includegraphics[width=0.3\textwidth]{./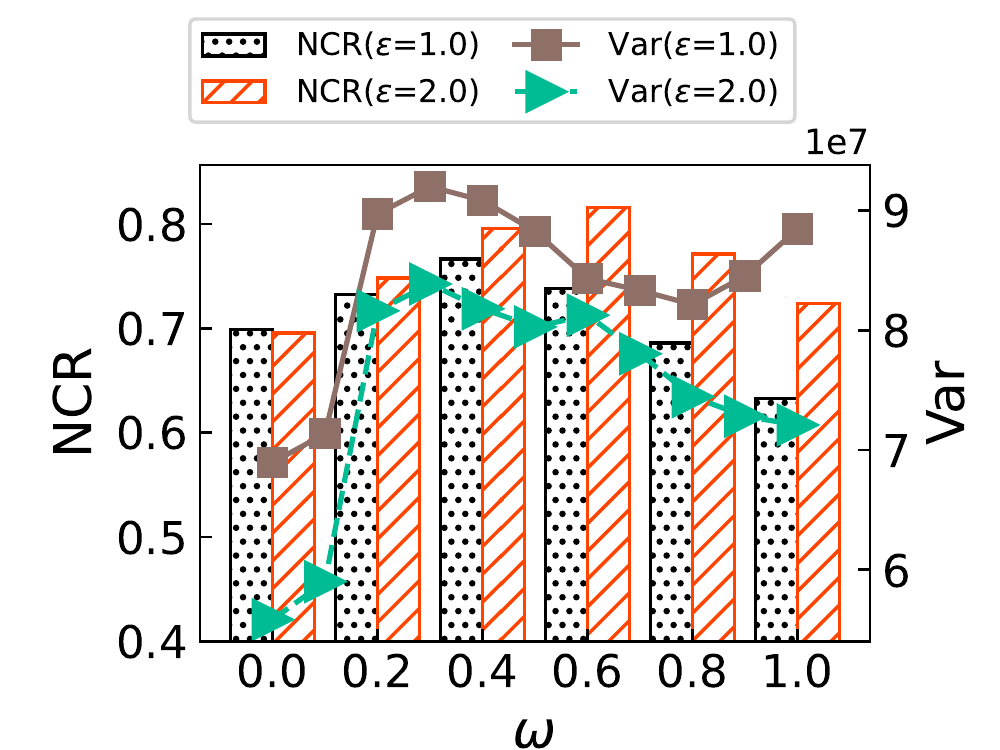}\label{ncr weight pos}}
\caption{Effectiveness of the itemset weighted combination when fixing $k = 100$ and $\epsilon=1$ and $2$.}
\label{fig:vary weight}
\end{figure*}

\subsubsection{\textbf{The impact of $k$}}
Both the results of NCR (on the first line) and Var (on the second line) are presented in Figure \ref{fig:vary k} when setting $\epsilon = 1$ and 2 separately (the privacy settings in deployed Apple protocol\cite{apple}). In all the settings where $\epsilon = 1$, LDP-FPMiner has the significantly higher NCR values than SVSM. In the setting where $\epsilon = 2$, LDP-FPMiner has higher NCR values than SVSM for both Synthetic and Kosarak datasets, but has slightly lower NCR values for BMS-POS dataset at some points. It seems that when $\epsilon$ is big enough, the SVSM scheme is still very effective, and comparable to the LDP-FPMiner. However, when $\epsilon = 1$ for all the datasets, LDP-FPMiner greatly outperforms SVSM. In the view of metric Var, LDP-FPMiner has lower Var values than SVSM in almost all the cases. It is surprising that even when LDP-FPMiner has slightly lower NCR values than SVSM, the corresponding Var values are still lower than SVSM. This again indicates that LDP-FPMiner is very effective for reducing the noise. 

In conclusion, LDP-FPMiner outperforms the SVSM in the top-$k$ task of FIM in the context of LDP. More specifically, in the case when $\epsilon$ is small (e.g. smaller than 2), it achieves higher score of itemsets identified as well lower noise injected on large domain datasets.

\subsubsection{\textbf{The impact of $\omega$}}
The effectiveness of the itemset weighted combination (explained in Section \ref{optimizations}) over three datasets when fixing $k = 100$ are presented in Figure \ref{fig:vary weight}. We use the version with all optimizations to illustrate the effectiveness. It turns out that this optimization effectively improves the performance of LDP-FPMiner, where the original result is when $\omega = 1$. As shown in Fig. \ref{fig:vary weight}, the selection of $\omega$ should balance between accuracy and error, and we give the reference selection that used in this paper as shown in Table \ref{dataset}.

\subsection{Optimizations}
\label{sec:optimizations}

In this subsection, we first compare LDP-FPMiner only applying a single optimization with the original version (without any optimizations). Then, we compare LDP-FPMiner simultaneously applying two or more optimizations with the original version. These experiments illustrate the effectiveness of single or combined optimizations.

The otimizations are introduced in Section \ref{optimize}, and for convenience we list both their abbreviations and full names as below:
\begin{itemize}
\setlength{\itemsep}{0pt}
\item PWC: prefix weighted combination,
\item CCI: conditional constrained inference,
\item NPB: negative-positive balance,
\item IWC: itemset weighted combination.
\end{itemize}

These optimizations will be applied seperately or simultaneously to LDP-FPMiner, and the BMS-POS dataset is used for evaluation.

\subsubsection{\textbf{Single Optimizations}}

We apply a single optimization to the orignal LDP-FPMiner each time, and show the effectiveness of each optimization seperately. Figure~\ref{fig:single} traces both NCR and Var values for each single optimizations. We can see that both PWC and CCI optimizations are seperately effective to improve the NCR values, namely the accuracy to identify frequent itemsets, but they cause higher Var values, namely the squre error. On the contrary, the NPB causes slightly lower NCR values than the original, but it reduces the noise very effectively. The last optimization IWC works rather well, and it raises NCR values and reduces Var values, improving the both metrics greatly. In a word, these optimizations tend to be combined together to get a good performance with high NCR and low Var values.

\begin{figure*}[tb]
  \centering
  \subfigure[NCR]{\includegraphics[width=0.32\textwidth]{./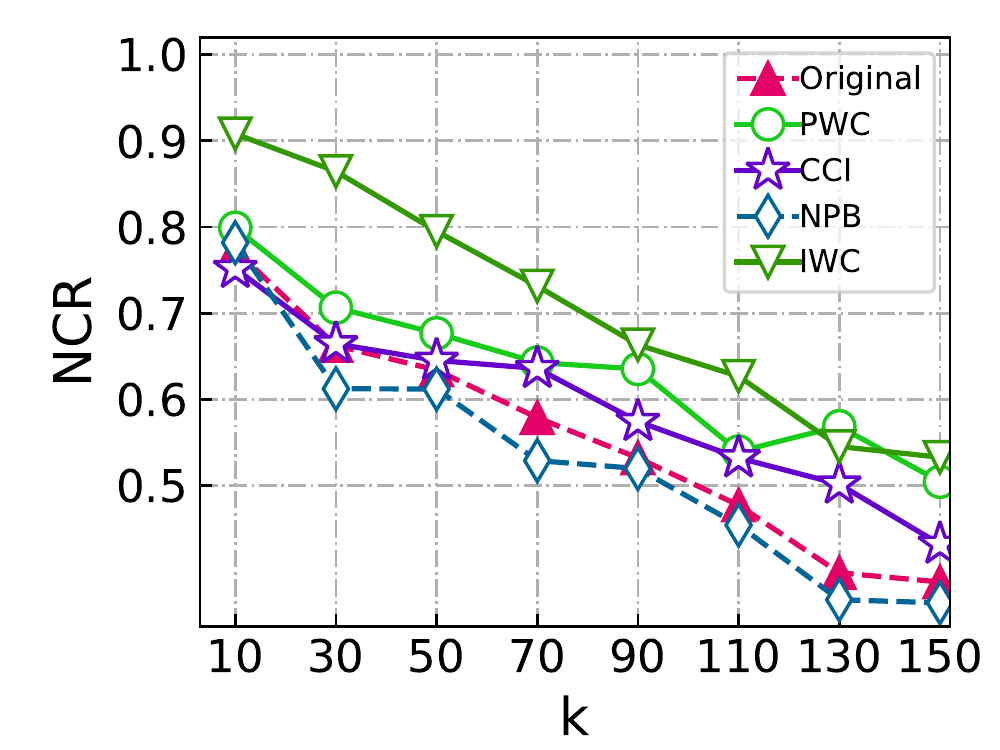}\label{fig:single_ncr}}
  \subfigure[Var]{\includegraphics[width=0.32\textwidth]{./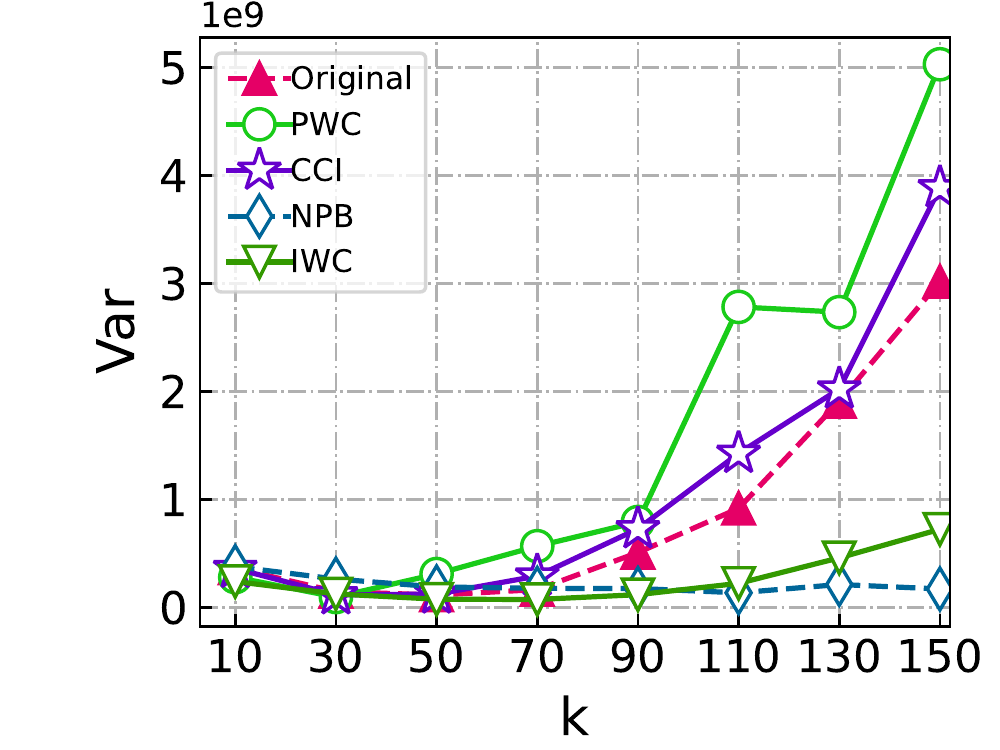}\label{fig:single_var}}
\caption{Effectiveness of single optimizatons when fixing $\epsilon=1$ and the dataset is BMS-POS.}
\label{fig:single}
\end{figure*}

\subsubsection{\textbf{Combined Optimizations}}

We gradually combine two or more optimizations together and apply them to LDP-FPMiner. Figure~\ref{fig:combined} illustrates the effectiveness of combined optimizations. We can see that the combination of PWC and CCI optimizations increases the NCR values, but also increases the Var values. However, when we combine PWC, CCI and NPB optimizations, it is surprising that the NCR values are further increased and the Var values are greatly reduced and become lower than those of the original version. Finally, we further combine all optimizations PWC, CCI, NPB and IWC together, and obtain the final result with even higher NCR values and lower Var values. It appears that the combination of optimizations magnifies the improvements. The underlying reason may be that all optimizations are on the right way to identify frequent itemsets and reduce frequency noise added, and they are complementary to each other. Additionally, for optimizations PWC and CCI, the improvement over NCR values gradually become more significant as $k$ increases, while for optimizations NPB and IWC, the Var values are reduced more greatly as $k$ increases. It may be because there are more room to improve when $k$ is large.

\begin{figure*}[tb]
  \centering
  \subfigure[NCR]{\includegraphics[width=0.32\textwidth]{./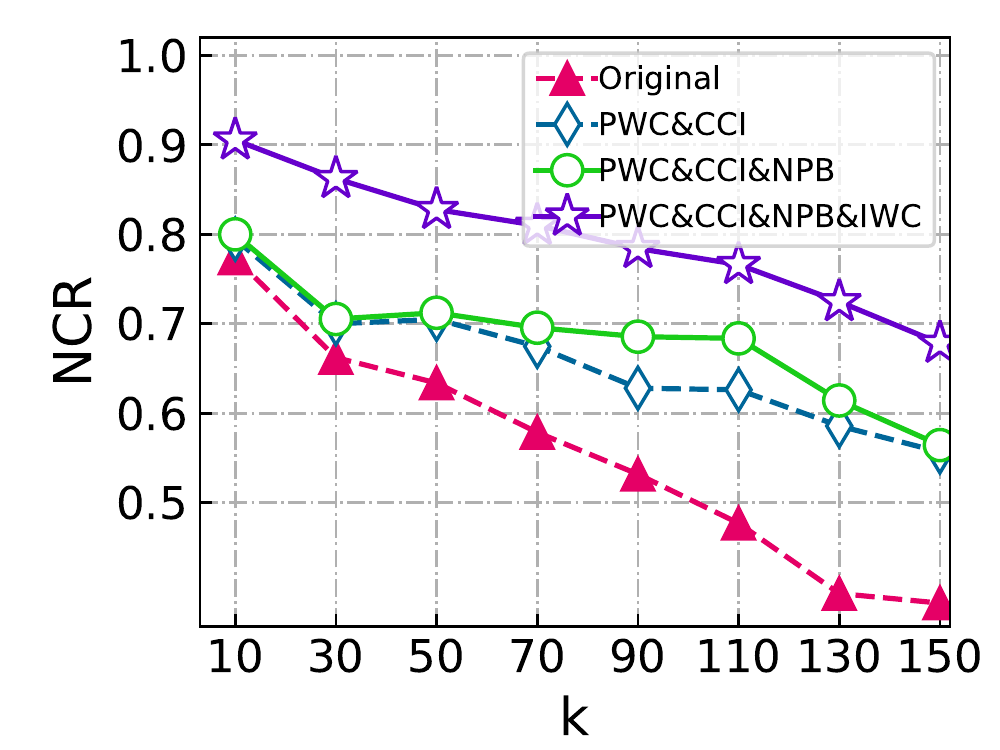}\label{fig:combined_ncr}}
  \subfigure[Var]{\includegraphics[width=0.32\textwidth]{./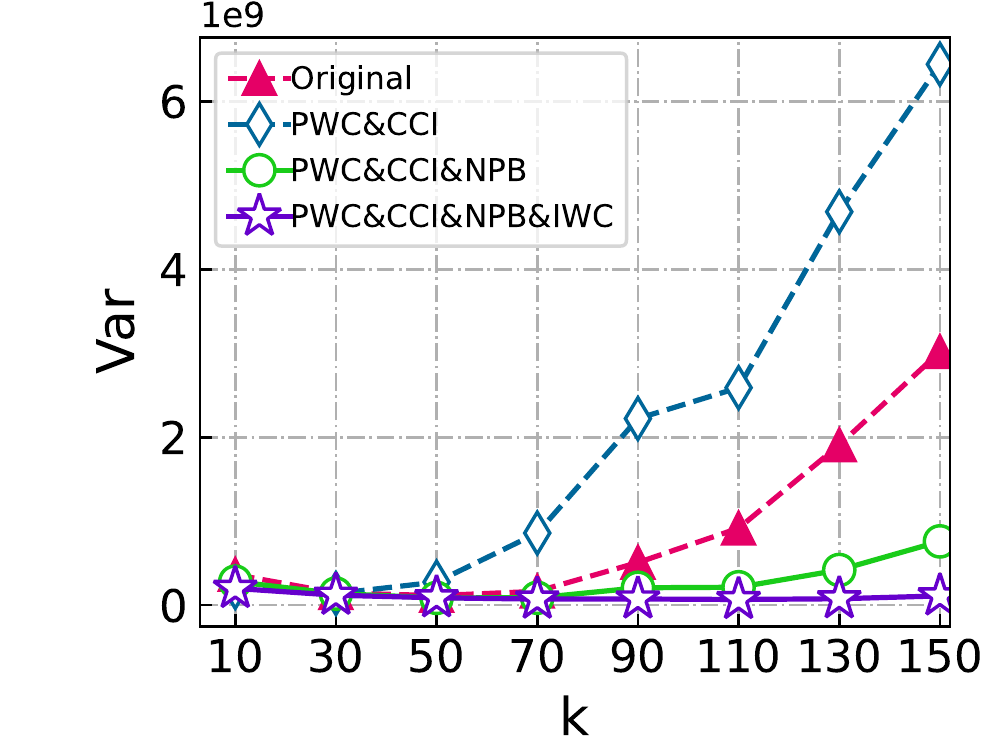}\label{fig:combined_var}}
\caption{Effectiveness of combined optimizatons when fixing $\epsilon=1$ and the dataset is BMS-POS.}
\label{fig:combined}
\end{figure*}

\section{Related Work}\label{relatedwork}
Local differential privacy (LDP) has become more and more popular for data privacy preservation, and most of existing works focus on basic statistics (e.g. \cite{a8,rr,rappor,a12,b3,imola2022communication}) to estimate mean values over numeric attributes or frequencies over categories. Besides, in recent years, there are many more complicated statistical analyses (e.g. heavy hitters\cite{privtrie,b3,bun2019heavy,cormode2021frequency}, key-value collection\cite{privkv,gu2019pckv}, multidimensional data\cite{b1,alibaba,xu2020collecting} and set-valued data\cite{privset,a1,a2} analysis) are proposed using frequency estimation as a building block.

Due to the set property of the data, the transactional (or set-valued) data setting is more challenging even when one just tries to find heavy hitters, not mention discovering itemsets. In the particular LDP setting, Qin et al.\cite{a1} propose the LDPminer that discover heavy hitters in two phases and leave FIM problem as an open problem. In \cite{privset}, the set-valued data aggregation mechanism PrivSet is proposed with low computational overhead but does not work well when the domain is large. To the best of our knowledge, the state-of-the-art solution \cite{a2} to FIM identified itemsets based on the PSFO protocol, and did not consider frequency consistency among itemsets. In this paper, we propose and optimize the FP-tree based approach by exploiting frequency consistency, and identify frequent itemsets effectively with high accuracy and low noise.

Besides, in the centralized differential privacy (CDP) setting, Bhaskar et al.\cite{a3} propose an approach with the exponential mechanism as well as the Laplace mechanism to release top-$k$ itemsets of length not greater than predefined factor $m$. Li et al.\cite{a4} define the $\theta$-basis set to improve the utility. The concurrent approach\cite{a5} improves the trade-off between privacy and utility with smart truncating as well as double standards. Lee et al.\cite{a6} identify top-$k$ itemsets and then construct a compact, differentially private FP-tree to derive frequencies of itemsets. These works are quite different from ours for the raw data from users are available in the CDP setting.

\section{Conclusion}\label{conclusion}
In this paper, we study the problem of privacy-preserving frequent itemset mining, and discover $k$ most frequent itemsets from sensitive transactions with LDP. The state-of-the-art protocol SVSM mainly applies the idea of guessing frequencies to find the candidate itemsets and then further identifies the top-$k$ itemsets with frequency oracle protocol without considering frequency consistency. Different from this, we combine frequent pattern tree (FP-tree) method, frequency oracle protocol, and guessing frequencies to build and optimize a noisy FP-tree with LDP by exploiting frequency consistency among itemsets, and then mining this FP-tree to find the top-$k$ frequent itemsets. To the best of our knowledge, this is the first time that FP-tree is applied in LDP setting to mine frequent itemsets. The experimental results show that the proposed approach LDP-FPMiner outperforms the SVSM significantly.


\bibliographystyle{ACM-Reference-Format}
\bibliography{reference}
\end{document}